\documentclass[conference]{IEEEtran}

\usepackage[utf8]{inputenc}
\usepackage{amsmath,amssymb,bm}
\usepackage{graphicx}
\usepackage{float}
\usepackage{verbatim}
\usepackage{array}
\usepackage{cite}
\usepackage{enumerate}
\usepackage{booktabs}
\usepackage[draft]{hyperref} 
\usepackage{color}
\usepackage{float}
\usepackage[capitalise]{cleveref}
\usepackage[mathcal]{euscript}
\usepackage{multirow,multicol,tabu}
\usepackage{arydshln} 
\usepackage{subcaption}
\usepackage[left=0.625in,right=0.625in,top=0.75in,bottom=1in]{geometry}
\usepackage{soul}
\usepackage{url}
\usepackage{algorithm2e}
\SetKwComment{Comment}{/* }{ */}
\RestyleAlgo{ruled}
\newcolumntype{?}{!{\vrule width 1pt}}
\usepackage{pgfplots}
\usepackage{filecontents}

\usepackage{mathtools}

\usepackage{etoolbox}

\makeatletter
\newcommand{\changeoperator}[1]{%
  \csletcs{#1@saved}{#1@}%
  \csdef{#1@}{\changed@operator{#1}}%
}
\newcommand{\changed@operator}[1]{%
  \mathop{%
    \mathchoice{\textstyle\csuse{#1@saved}}
               {\csuse{#1@saved}}
               {\csuse{#1@saved}}
               {\csuse{#1@saved}}%
  }
}
\makeatother

\changeoperator{sum}
\changeoperator{prod}

\usepackage{tikz,tabularx}
\usetikzlibrary{decorations.pathreplacing}
\tikzstyle{every picture}+=[remember picture]

\usetikzlibrary{shapes,arrows,shadows}
\usetikzlibrary{patterns}
\usetikzlibrary{shapes.geometric}
\usetikzlibrary{arrows}

\usepackage{tikz,tabularx}
\usetikzlibrary{decorations.pathreplacing}
\tikzstyle{every picture}+=[remember picture]

\usetikzlibrary{decorations.markings}

\usepackage{multicol}
\usepackage{pgfplots}
\pgfplotsset{major grid style={densely dotted,black!70}}
\pgfplotsset{compat=1.9}
\usetikzlibrary{arrows}

\usepackage{amsthm}

\newtheorem{remark}{Remark}
\newtheorem{definition}{Definition}
\newtheorem{theorem}{Theorem}
\newtheorem{lemma}{Lemma}
\newtheorem{proposition}{Proposition}

\newcommand{\tcr}[1]{\textcolor{red}{#1}}

\usepackage{tikz,tabularx}
\usetikzlibrary{decorations.pathreplacing}
\usetikzlibrary{shapes.geometric, arrows.meta, fit, positioning}
\tikzstyle{every picture}+=[remember picture]

\usetikzlibrary{shapes,arrows,shadows}
\usepgflibrary{patterns}
\usetikzlibrary{patterns}
\usetikzlibrary{shapes.geometric}
\usetikzlibrary{arrows}
\usetikzlibrary{positioning}

\usepackage{tikz,tabularx}
\usetikzlibrary{decorations.pathreplacing}
\tikzstyle{every picture}+=[remember picture]

\usetikzlibrary{decorations.markings}

\usepackage{multicol}
\usepackage{pgfplots}
\title{Fundamental Limits of Covert queueing under Busy-Period Observations}

\author{%
  \IEEEauthorblockN{Anumula Venkata Sai Sree Sahithi\IEEEauthorrefmark{1}, Arti Yardi\IEEEauthorrefmark{1}, and Jithin Ravi\IEEEauthorrefmark{2}}
  \IEEEauthorblockA{\IEEEauthorrefmark{1}%
  International Institute of Information Technology Hyderabad, Hyderabad, India}
  \IEEEauthorblockA{\IEEEauthorrefmark{2}%
  Indian Institute of Technology Kharagpur, Kharagpur, India}
             \IEEEauthorblockA{Emails:sahithi.anumula@research.iiit.ac.in, arti.yardi@iiit.ac.in, jithin@ece.iitkgp.ac.in}
}

\begin{document}
\maketitle

\begin{abstract}
Consider a queueing system operated by Bob, which is intended to serve only jobs generated by a legitimate user, Willie. Willie's jobs arrive according to a Poisson process with rate $\lambda_w$, and the service times are exponentially distributed with rate $\mu$. Bob seeks to covertly serve additional jobs arriving according to an independent Poisson process with rate $\lambda_b$, without Willie detecting their presence. As the legitimate user, Willie is provided with certain statistics of each busy period by Bob. In particular, we consider two statistics: (i) the number of jobs served during each busy period and (ii) the duration of each busy period. We establish that, under either statistic, the maximum expected number of additional jobs that Bob can covertly serve obeys a square-root law. Specifically, over $n$ busy periods, Bob can covertly serve additional jobs in the order of  $\sqrt{n}$ whereas any scaling which is above $\sqrt{n}$  enables Willie to asymptotically detect the presence of these additional jobs with vanishing error probability. Furthermore, we characterize the scaling constant associated with the square-root law, which we refer to as the covert queueing constant, for each of the two statistics.
We compare the scaling constants corresponding to the two statistics and argue that the scaling constant under statistic (i) is no greater than that under statistic (ii).
\end{abstract}


\section{Introduction}

Consider a scenario in which two parties, Alice and Bob, collaborate to accomplish a common objective, while some of their actions or observable statistics are monitored by a warden, Willie. Alice and Bob wish to achieve their objective without Willie detecting their collaboration. Such problems broadly fall under the framework of \emph{stealth}. When the objective is specifically to communicate information without revealing the very presence of communication, the problem is commonly referred to as \emph{covert communication}.

In a standard covert communication setting, a transmitter Alice communicates with a receiver Bob over a noisy channel, while a warden Willie observes the transmission through another channel and attempts to determine whether Alice is transmitting. The fundamental question is how much information Alice can reliably communicate to Bob while keeping the transmission undetectable to Willie. A seminal result of Bash \textit{et al.}~\cite{Bash_sq_root_law_2012} established the \emph{square-root law}: over $n$ channel uses, Alice can covertly transmit on the order of $\sqrt{n}$ bits, whereas transmitting asymptotically more than $\sqrt{n}$ bits makes reliable detection by Willie possible. Following this seminal work, covert communication has been extensively studied under a variety of communication models \cite{Bash_sq_root_journal_2013, Bash2014, Bullock2020, Goeckel2016, Che2013,Che2014, Bloch2016, Tahmasbi2020Learning, Goeckel2017ActiveSensing, Sobers2017ContinuousTime, Zhang2021AdversariallyJammed, Wang2016}. 

In this paper, we investigate an analogous stealth problem in a queueing system. Consider a queue operated by Bob that is intended to serve only jobs generated by Willie. Bob, however, colludes with Alice and wishes to additionally serve Alice's jobs, which we refer to as \emph{non-Willie jobs}, without Willie being able to detect the presence of extra jobs being served. Specifically, we consider an $M/M/1$ queue in which Willie generates jobs according to a Poisson process with arrival rate $\lambda_w$, and the server operates at service rate $\mu$. To monitor the usage of the queue, Willie is provided with certain statistics of the resulting busy periods. The fundamental question is then: how many non-Willie jobs can Bob serve while keeping their presence undetectable from the statistics observed by Willie? We refer to this problem as \emph{covert queueing}.

Several works have studied covert insertion in packet and queueing systems \cite{Mukherjee2016, Ghassami2018, Ghassami2018RR}. Soltani \textit{et al.}~\cite{Soltani2015} studied a Poisson packet channel
in which Bob inserts packets while Willie observes packet timings and
showed that the number of covert packets follows square-root scaling.
This was further extended by Soltani \textit{et al.}~\cite{Towsley_packet_insertion_2020},
where general renewal inter arrival times and established
the same scaling. Covert insertion has also been studied in queueing
systems. Jiang \textit{et al.}~\cite{Towsley_cycle_stealing_2020}
considered a single FIFO server and showed that the number of covert
jobs depends on the service rates of Bob's and Willie's jobs, with
square-root scaling under certain conditions. Most closely related
to our work, Yardi \textit{et al.}~\cite{Comm_letter_paper}
considered a queue in which Willie observes either the number of jobs
served or the busy period duration. They showed $O(\sqrt{n})$
scaling for the former, but only $O(1)$ expected covert service time
for the latter. Yardi \textit{et al.}~\cite{Our_ITW_MC_Queueing}
also considered a Markovian statistic and obtained square-root
scaling.

Motivated by these works, we consider a queueing system in which covert or illegitimate traffic is deliberately introduced alongside the legitimate traffic generated by Willie. Bob provides Willie with one of the following two statistics for each busy period:
\begin{enumerate}
\item \emph{Statistic-1}: the number of jobs served during the $i$-th busy period, denoted by $X_i$;
\item \emph{Statistic-2}: the duration of the $i$-th busy period, denoted by $Y_i$.
\end{enumerate}
Based on the statistics observed over $n$ busy periods, Willie seeks to determine whether Bob is serving non-Willie jobs. This problem was first introduced in~\cite{Comm_letter_paper}, where the analysis was limited to achievability results for the scaling of the number of additional jobs under the two statistics. Specifically, an achievability result of order $\sqrt{n}$ was established for Statistic-1, whereas only a constant-order achievability result was obtained for Statistic-2. In this work, we establish the fundamental scaling law for both statistics by proving that the square-root law holds in both cases. In this work, we characterize the fundamental scaling law for both statistics. We show that the square-root law holds in both cases: Bob can covertly serve on the order of $\sqrt{n}$ non-Willie jobs, whereas serving asymptotically more than $\sqrt{n}$ non-Willie jobs enables Willie to reliably detect their presence.

Having established the square-root scaling, we next investigate the corresponding scaling constants. In conventional covert communication, the optimal constant associated with the square-root law is commonly referred to as the \emph{covert capacity}. Such constants were characterized for classes of discrete memoryless channels and the Gaussian channel in~\cite{Wang2016}. Motivated by this notion, we characterize the corresponding square-root scaling constants for the two queueing statistics considered here. Specifically, Theorem~\ref{Thm:KX} and Theorem~\ref{Thm:KY} characterize the scaling constants associated with Statistic-1 and Statistic-2, respectively.

The rest of the paper is organized as follows. \cref{sec:system_model} presents the preliminaries and problem formulation. \cref{sec:Main_results} establishes the main results for the two statistics, including their square-root scaling laws and covert queueing constants. \cref{sec:Discussion} compares the two statistics and discuss about covert capacity. \cref{sec:Conclusion}
concludes the paper and discusses future directions. Detailed proofs
are provided in the appendices.

\section{Preliminaries and Problem formulation}
\label{sec:system_model}

\subsection{Preliminaries}
\color{black}

\subsubsection{Preliminaries on the $M/M/1$ Queue}

We consider an $M/M/1$ queue with arrival rate $\lambda$ and service rate $\mu$, where $0<\lambda<\mu$. The interarrival times are exponentially distributed with rate $\lambda$, while the service times are exponentially distributed with rate $\mu$. The condition $\lambda<\mu$ ensures the stability of the queue. The server utilization factor, denoted by $\rho$, is defined as the ratio of the arrival rate to the service rate, i.e.,
\begin{align}
\rho \triangleq \frac{\lambda}{\mu}.\nonumber
\end{align}

We divide the operation of the queue into successive \emph{sessions}, where each session corresponds to a busy period of the queue. A session begins when a job arrives while the queue is empty. Following this arrival, the server remains busy as long as there is at least one job in the system. The session ends when the service of a job is completed and the system becomes empty again. We denote the $i$-th session by $S_i$.

Let $X_i$ denote the total number of jobs served during session $S_i$. By the exponential interarrival and service times, the random variables ${X_i}$ are independent and identically distributed (i.i.d.). We denote their common distribution by that of the random variable $X$. For an $M/M/1$ queue, the probability mass function of $X$ is given by~\cite{Queueing_systems_book}
\begin{equation}
\label{eq:number_stat}
f_X(x)
=
\frac{1}{x}
\binom{2x-2}{x-1}
\frac{\rho^{x-1}}{(1+\rho)^{2x-1}},
\qquad x=1,2,\ldots,
\end{equation}
where $\binom{a}{b}$ denotes the binomial coefficient. The expected number of jobs served during a session is
\begin{equation}
\mathbb{E}[X]
=
\frac{\mu}{\mu-\lambda}.
\label{eq:expectation_stat1}
\end{equation}

Let $Y_i$ denote the duration of session $S_i$. The random variables ${Y_i}$ are also i.i.d., and we denote their common distribution by that of the random variable $Y$. For an $M/M/1$ queue, the probability density function of $Y$ is given by~\cite{Queueing_systems_book}
\begin{equation}
\label{eq:length_stat}
g_Y(y)
=
e^{-(\lambda+\mu)y}
\sqrt{\frac{\mu}{\lambda}}
\frac{I_1\left(2y\sqrt{\lambda\mu}\right)}{y},
\qquad y>0,
\end{equation}
where $I_1(\cdot)$ denotes the modified Bessel function of the first kind of order one. The expected duration of a session is
\begin{equation}
\mathbb{E}[Y]
=
\frac{1}{\mu-\lambda}.
\end{equation}

\subsubsection{Order Notations}
Let $f(x)$ and $g(x)$ be two real valued functions, and let $a \in \mathbb{R}\cup
\{\infty\}$.  We write
$f(x)=O(g(x))$ if $\limsup_{x \to a} \frac{|f(x)|}{|g(x)|} < \infty$. Similarly, we write  $f(x)=\Omega(g(x))$ if $\liminf_{x \to a} \frac{|f(x)|}{|g(x)|} > 0$. Furthermore,  $f(x)=\Theta(g(x))$ if
$f(x)$ is both $O(g(x))$ and $\Omega(g(x))$. We write $f(x)=o(g(x))$ if
$\lim_{x \to a} \frac{|f(x)|}{|g(x)|} =0$. Finally, we write $f(x)=\omega(g(x))$ if
$\liminf_{x \to a} \frac{|f(x)|}{|g(x)|} = \infty$.

\subsection{Problem formulation}

Consider a queueing system in which Bob operates a server with service rate $\mu$. A legitimate user, Willie, submits jobs to the server according to a Poisson process with arrival rate $0<\lambda_w<\mu$. Bob wishes to covertly introduce an additional, independent Poisson stream of non-Willie jobs with arrival rate $\lambda_b\geq 0$. Consequently, when Bob inserts additional jobs, the aggregate arrival process is Poisson with rate $\lambda_w+\lambda_b$. We assume throughout that
\begin{align}
\lambda_w+\lambda_b<\mu,\nonumber
\end{align}
so that the queue remains stable.

As part of their agreement, after the completion of $n$ sessions, Bob provides Willie with a usage statistic $U_i$ corresponding to each session $S_i$. Based on the observations
\begin{align}
U^n \triangleq (U_1,U_2,\ldots,U_n),\nonumber
\end{align}
Willie attempts to determine whether Bob has introduced an additional Poisson stream of jobs. Thus, Willie performs the following binary hypothesis test:
\begin{equation}
\begin{aligned}
H_0 &: \lambda=\lambda_w,\\
H_1 &: \lambda=\lambda_w+\lambda_b.
\end{aligned}
\label{Eq:Hypothesis_setup}
\end{equation}

Let $\alpha_n$ and $\beta_n$ denote the false-alarm and missed-detection probabilities, respectively, associated with Willie's detection rule, i.e.,
\begin{align}
\alpha_n
&\triangleq
\Pr(\text{decide }H_1\mid H_0\text{ is true}), \notag \\
\beta_n
&\triangleq
\Pr(\text{decide }H_0\mid H_1\text{ is true}).\nonumber
\end{align}
Let $f_j(u)$, $j=0,1$, denote the distribution of the statistic $U$ under hypothesis $H_j$. Since the statistics $U_1,U_2,\ldots,U_n$ are assumed to be i.i.d., their joint distribution under $H_j$ is given by
\begin{equation}
f_j^n(u^n)
=
\prod_{i=1}^{n} f_j(u_i),
\qquad j\in{0,1}.\nonumber
\end{equation}
For the optimal detection rule, it follows that~\cite[Thm.~13.1.1]{Lehmann_Book}
\begin{align}
\alpha_n+\beta_n
=
1-\operatorname{TV}
\left(
f_0^n,f_1^n
\right),\nonumber
\end{align}
where $\operatorname{TV}(P,Q)$ denotes the total variation distance between the distributions $P$ and $Q$.

For a given $\epsilon\in(0,1/2)$, Bob is said to achieve $\epsilon$-covert queueing if
\begin{align}
\alpha_n+\beta_n
\geq
1-\epsilon.\nonumber
\end{align}
By Pinsker's inequality~\cite[Ch.~11]{Thomas_Cover_Book},
\begin{align}
\operatorname{TV}(P,Q)
\leq
\sqrt{\frac{1}{2}D_{\mathrm{KL}}(P\|Q)},\nonumber
\end{align}
where $D_{\mathrm{KL}}(P\|Q)$ denotes the Kullback--Leibler (KL) divergence\footnote{Here, $\log$ denotes the natural logarithm. For continuous probability distributions, the summation is replaced by the corresponding integral over the probability density functions.} between $P$ and $Q$, and is defined 
\begin{align}
D_{\mathrm{KL}}(P\|Q)
\triangleq
\sum_x P(x) \log \frac{P(x)}{Q(x)}.\nonumber
\end{align}

Therefore, if
\begin{align}
D_{\mathrm{KL}}
\left(
f_0^n\|f_1^n
\right)
\leq\delta,
\qquad 0<\delta<1,\nonumber
\end{align}
then
\begin{align}
\alpha_n+\beta_n
\geq
1-\sqrt{\frac{\delta}{2}}.\nonumber
\end{align}
\begin{definition}
\label{covert_defibition}
Bob's additional arrival process is said to be \emph{$\delta$-covert with respect to the statistic $U$} if
\begin{align}
\label{eq:delta_covert}
    \limsup_{n\to\infty}
    D_{\mathrm{KL}}\left(f_0^n\|f_1^n\right)
    \leq\delta,
    \qquad 0<\delta<1.
\end{align}
\end{definition}

For a sequence of arrival rates
$\{\lambda_b(n)\}_{n\geq 1}$, let $E_{b,U}(n)$
denote the expected number of additional jobs inserted
by Bob over $n$ sessions, given by
\begin{align}
\label{eq:expected_jobs}
E_{b,U}(n)
&=n\frac{\lambda_b(n)}
{\lambda_w+\lambda_b(n)}\mathbb{E}[X],
\end{align}
where the expectation is taken with respect to
the distribution $f_X(x)$ defined in
\eqref{eq:number_stat}.

Our first objective is to characterize the maximum
asymptotic scaling of $E_{b,U}(n)$ achievable under
the covertness constraint in \eqref{eq:delta_covert}.
We denote this optimal scaling by $E_{b,U}^*(n)$.
More precisely, we say that
\begin{align}
E_{b,U}^*(n)=\Theta(f(n))
\end{align}
if the following two conditions hold:

\begin{enumerate}
    \item There exists a sequence of
    arrival rates $\{\lambda_b(n)\}_{n\geq 1}$
    satisfying the covertness constraint such that
    \begin{align}
        E_{b,U}(n)=\Omega(f(n)).
    \end{align}

    \item For every sequence of
    arrival rates $\{\lambda_b(n)\}_{n\geq 1}$
    satisfying the covertness constraint,
    \begin{align}
        E_{b,U}(n)=O(f(n)).
    \end{align}
\end{enumerate}

In particular, we consider the statistics $U=X$ and $U=Y$ defined in \eqref{eq:number_stat} and in \eqref{eq:length_stat}, respectively\footnote{To simplify the notation, we denote the distribution of $U=X$ by $f_X(x)$ and that of $U=Y$ by $g_Y(y)$. When there is no ambiguity, we omit the subscripts and simply write $f(x)$ and $g(y)$, respectively.}. When $U=X$, Bob reveals the number of jobs served in each session, i.e.,
\begin{align}
X^n=(X_1,X_2,\ldots,X_n),\nonumber
\end{align}
whereas when $U=Y$, Bob reveals the duration of each session, i.e.,
\begin{align}
Y^n=(Y_1,Y_2,\ldots,Y_n).\nonumber
\end{align}

In Propositions~\ref{Proposition:stat1_achievability} and~\ref{Proposition:stat2}, we show that, for both $U=X$ and $U=Y$, the maximum number of additional jobs that can be covertly inserted obeys the square-root scaling law
\begin{align}
E_{b,U}^{*}(n)=\Theta(\sqrt{n}),
\qquad U\in\{X,Y\}.
\end{align}
A similar square-root scaling law arises in the covert communication problem studied \cite{Wang2016}, where the constant associated with the $\sqrt{n}$ scaling is characterized and referred to as the \emph{covert capacity}. Motivated by this result, and since the maximum number of additional jobs in our setting also follows a square-root scaling law, we are interested in characterizing the corresponding scaling constant. Specifically, we define the \emph{covert queueing constant} associated with the statistic $U$ as
\begin{equation}
K_U
\triangleq
\liminf_{n\to\infty}
\frac{E_{b,U}^{*}(n)}{\sqrt{n\delta}},
\qquad U\in{\{X,Y}\},
\label{Eqn_KX_Definition}
\end{equation}
where $\delta$ denotes the covertness constraint. Our goal in this paper is to characterize the covert queueing constants $K_X$ and $K_Y$.

\section{Main results}
\label{sec:Main_results}

In this section, we present our results for the statistic $U=X$ in
\cref{sec:statistic1} and for the statistic $U=Y$ in
\cref{sec:statistic2}. We first establish the scaling relation between $E_{b,U}(n)$ and $\lambda_b(n)$ in Proposition~\ref{prop:ebn_lambda_b}.

\begin{proposition}
\label{prop:ebn_lambda_b}
Suppose that Bob employs a sequence of arrival rates
$\{\lambda_b(n)\}_{n\geq 1}$ such that the queue is uniformly stable, i.e.,
there exists an $\eta>0$ and $n_0$ for which
\begin{align}
    \lambda_w+\lambda_b(n) \leq \mu-\eta,
    \qquad  n \geq n_0.\nonumber
\end{align}
Then, the expected number of additional jobs inserted by Bob satisfies
\begin{equation}
\label{eq:Eb_equiv}
    E_{b,U}(n)
    =
    n\Theta\bigl(\lambda_b(n)\bigr),
    \qquad U\in\{X,Y\}.\nonumber
\end{equation}
Equivalently,
\begin{align}
    \lambda_b(n)
    =
    \frac{1}{n}\Theta\bigl(E_{b,U}(n)\bigr).
\end{align}
\end{proposition}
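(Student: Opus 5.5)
The plan is to substitute the explicit mean of $X$ into the defining formula \eqref{eq:expected_jobs} and sandwich the resulting prefactor between two positive constants that do not depend on $n$. One point needs settling first: which expectation appears there. In \eqref{eq:expected_jobs}, $\mathbb{E}[X]$ is the mean number of jobs per session under the aggregate rate $\lambda=\lambda_w+\lambda_b(n)$, since Bob's jobs are counted over sessions of the loaded queue. The same expression is used for both $U=X$ and $U=Y$, because $E_{b,U}(n)$ does not depend on which statistic Willie observes. So one computation covers both cases.

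By \eqref{eq:expectation_stat1} with $\lambda=\lambda_w+\lambda_b(n)$,
\begin{align}
E_{b,U}(n)=n\,\lambda_b(n)\,\frac{1}{\lambda_w+\lambda_b(n)}\cdot\frac{\mu}{\mu-\lambda_w-\lambda_b(n)}
\triangleq n\,\lambda_b(n)\,c_n.\nonumber
\end{align}
Each sub-argument below hinges on showing $c_{\min}\le c_n\le c_{\max}$ for all $n\ge n_0$, with constants independent of $n$.

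\emph{Upper bound on $c_n$.} Since $\lambda_b(n)\ge 0$, we have $\lambda_w+\lambda_b(n)\ge\lambda_w>0$. By uniform stability, $\mu-\lambda_w-\lambda_b(n)\ge\eta$. Together these give $c_n\le \mu/(\lambda_w\eta)$.

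\emph{Lower bound on $c_n$.} Uniform stability also gives $\lambda_w+\lambda_b(n)\le\mu$. Moreover $\mu-\lambda_w-\lambda_b(n)\le\mu-\lambda_w\le\mu$. Hence $c_n\ge 1/\mu$.

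With both bounds, $n\lambda_b(n)/\mu\le E_{b,U}(n)\le n\lambda_b(n)\mu/(\lambda_w\eta)$ for $n\ge n_0$, which is exactly $E_{b,U}(n)=n\Theta(\lambda_b(n))$. Dividing through by $n$ and inverting the two-sided inequality gives the equivalent form $\lambda_b(n)=\frac1n\Theta(E_{b,U}(n))$. If $\lambda_b(n)=0$, both sides vanish and the statement holds trivially.

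The only delicate step is interpreting $\mathbb{E}[X]$ under the loaded arrival rate. The upper bound requires $\eta$: without uniform stability, $c_n$ could blow up as the load approaches $\mu$. If $\mathbb{E}[X]$ were instead taken under $H_0$, the argument would only simplify, since $c_n$ would then lie between $1/\mu$ and $\mu/(\lambda_w(\mu-\lambda_w))$.
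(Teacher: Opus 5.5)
Your proposal is correct and follows essentially the same route as the paper: substitute $\mathbb{E}[X]=\mu/(\mu-\lambda_w-\lambda_b(n))$ into \eqref{eq:expected_jobs}, then use $\lambda_w\le\lambda_w+\lambda_b(n)\le\mu-\eta$ to bound the prefactor between positive constants that do not depend on $n$. The only differences are that you make the constants explicit ($1/\mu$ and $\mu/(\lambda_w\eta)$) and note the trivial case $\lambda_b(n)=0$, whereas the paper simply asserts that the prefactor is $\Theta(1)$.
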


\begin{proof}
 The proof is elementary and is given in Appendix~\ref{app:D}. 
\end{proof}

 We further give a remark on the Taylor expansion of the KL divergence.
\begin{remark}
Suppose that $\{P_{\kappa}:\kappa\in\mathbb{R}^{+}\}$ is a parametric family of distributions, and consider two nearby distributions $P_{\kappa}$ and $P_{\kappa'}$, where
\begin{align}
    \kappa'=\kappa+\gamma,
    \qquad \gamma>0.\nonumber
\end{align}
Assuming smoothness with respect to $\gamma$, a Taylor expansion of
$D_{\mathrm{KL}}(P_{\kappa}\|P_{\kappa+\gamma})$
around $\gamma=0$ yields
\begin{align}
D_{\mathrm{KL}}(P_{\kappa}\|P_{\kappa+\gamma})
&=
D_{\mathrm{KL}}(P_{\kappa}\|P_{\kappa})
\nonumber\\
&\quad+
\gamma
\left.
\frac{\partial}{\partial\gamma}
D_{\mathrm{KL}}(P_{\kappa}\|P_{\kappa+\gamma})
\right|_{\gamma=0}
\nonumber\\
&\quad+
\frac{\gamma^2}{2}
\left.
\frac{\partial^2}{\partial\gamma^2}
D_{\mathrm{KL}}(P_{\kappa}\|P_{\kappa+\gamma})
\right|_{\gamma=0}
+
O(\gamma^3)
\nonumber\\
&=
\frac{\gamma^2}{2}
\left.
\frac{\partial^2}{\partial\gamma^2}
D_{\mathrm{KL}}(P_{\kappa}\|P_{\kappa+\gamma})
\right|_{\gamma=0}
+
O(\gamma^3).
\label{eq:Taylor_KL}
\end{align}
The last equality follows from the nonnegativity of the KL divergence. In particular,
\begin{align}
D_{\mathrm{KL}}(P_{\kappa}\|P_{\kappa+\gamma})\geq 0,\nonumber
\end{align}
with equality at $\gamma=0$, since
$P_{\kappa+\gamma}=P_{\kappa}$ when $\gamma=0$. Thus, $\gamma=0$ is a minimum of
$D_{\mathrm{KL}}(P_{\kappa}\|P_{\kappa+\gamma})$. Consequently,
\begin{align}
D_{\mathrm{KL}}(P_{\kappa}\|P_{\kappa})=0,
\qquad
\left.
\frac{\partial}{\partial\gamma}
D_{\mathrm{KL}}(P_{\kappa}\|P_{\kappa+\gamma})
\right|_{\gamma=0}
=0,\nonumber
\end{align}
which establishes \eqref{eq:Taylor_KL}.
\end{remark}

\subsection{Results on Statistic $U=X$}
\label{sec:statistic1}

When the revealed statistic $U$ is the number of jobs served during a session, i.e., $U=X$, its distribution is given by \eqref{eq:number_stat}. Denoting the distribution of $X$ under hypothesis $H_j$, $j = 0,1$, by $f_j(x)$, we have
\begin{equation}
\label{eq:number_stat_res}
f_j(x)
=
\frac{1}{x}
\binom{2x-2}{x-1}
\frac{\rho_j^{x-1}}{(1+\rho_j)^{2x-1}},
\qquad x=1,2,\ldots,
\end{equation}
where
\begin{align}
\label{eq:rhos}
\rho_0=\frac{\lambda_w}{\mu},
\qquad
\rho_1=\frac{\lambda_w+\lambda_b}{\mu}.
\end{align}

It was shown  in \cite[Theorem~1]{Comm_letter_paper} that, when $U=X$, Bob can covertly insert jobs with a scaling of order $\sqrt{n}$. Hence, we have
\begin{align}
\label{eq:scale_Eb}
E_{b,X}(n)=\Omega(\sqrt{n}).
\end{align}
The following proposition establishes that no scaling strictly larger than $\sqrt{n}$ is possible.

\begin{proposition}
\label{Proposition:stat1_achievability}
Suppose that the statistic provided by server Bob to Willie for each session $S_i$ is the number of jobs served, $X_i$. If
\begin{align}
E_{b,X}(n)=\omega(\sqrt{n}),
\end{align}
then there exists a hypothesis test for Willie such that 
\begin{align}
\alpha_n+\beta_n \to 0
\end{align}
as $n\to\infty$. Consequently, Bob cannot covertly insert jobs at any scaling strictly larger than $\sqrt{n}$.
\end{proposition}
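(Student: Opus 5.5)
Under $H_0$ the distribution $f_0$ has mean $\mathbb{E}_0[X]=1/(1-\rho_0)$, and under $H_1$ it has mean $1/(1-\rho_1)$. The plan is to let Willie use the sample mean of the observations and show that it separates the two hypotheses.

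\emph{Step 1: translate the hypothesis into a lower bound on $\lambda_b(n)$.} Two cases arise. If $\lambda_b(n)$ does not tend to zero, pass to a subsequence along which $\lambda_b(n)\geq c>0$. Along it the two means differ by a constant, and the law of large numbers already gives a consistent test; for this we only need the variance under $H_1$ to be finite, which holds because we assume $\lambda_w+\lambda_b<\mu$. Otherwise $\lambda_b(n)\to 0$, so the queue is uniformly stable for large $n$. Proposition~\ref{prop:ebn_lambda_b} then gives $E_{b,X}(n)=n\Theta(\lambda_b(n))$, and $E_{b,X}(n)=\omega(\sqrt{n})$ yields $\lambda_b(n)\sqrt{n}\to\infty$.

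\emph{Step 2: build the test.} Let $\bar X_n=\frac1n\sum_i X_i$ and set $m_j=1/(1-\rho_j)$. Willie decides $H_1$ when $\bar X_n>\tau_n$, where
\begin{align}
\tau_n=\frac{m_0+m_1}{2}.\nonumber
\end{align}
The gap satisfies $m_1-m_0=\frac{\rho_1-\rho_0}{(1-\rho_0)(1-\rho_1)}\geq \lambda_b(n)/\mu$, so $\Delta_n\triangleq m_1-m_0$ obeys $\sqrt{n}\,\Delta_n\to\infty$.

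\emph{Step 3: bound the two error probabilities.} The variance of the busy-period count is $\sigma_j^2=\rho_j(1+\rho_j)/(1-\rho_j)^3$. Chebyshev's inequality then gives
\begin{align}
\alpha_n\leq \frac{4\sigma_0^2}{n\Delta_n^2},\qquad \beta_n\leq \frac{4\sigma_1^2}{n\Delta_n^2}.\nonumber
\end{align}
By uniform stability, $\sigma_1^2$ is bounded uniformly in $n$, so both bounds vanish and $\alpha_n+\beta_n\to0$.

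The main obstacle is the bookkeeping in Step 1, namely making sure the hypothesis of Proposition~\ref{prop:ebn_lambda_b} applies. One must justify that uniform stability holds, or else handle the case where it fails. If $\lambda_w+\lambda_b(n)\to\mu$, the mean $m_1$ blows up, so separation becomes easier rather than harder. Because the variance also grows in that regime, the cleanest route there is to replace Chebyshev's bound for $\beta_n$ by a one-sided argument. Concretely, I would use stochastic monotonicity of $X$ in $\rho$, which follows from a coupling of the two Poisson streams: it lets us compare $H_1$ with some fixed $\rho'>\rho_0$ and reduce to the constant-gap case.
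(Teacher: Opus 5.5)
Your argument is correct, and it proves more than the paper's proof does.

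\textbf{How the paper argues.} The paper adds a hypothesis that is not in the statement: uniform stability, i.e.\ $\mu-\lambda_w-\lambda_b(n)\geq\eta$ for large $n$. It then uses Proposition~\ref{prop:ebn_lambda_b} to get $\sqrt{n}\lambda_b(n)\to\infty$. Its test is two-sided, $|\overline{S}-m_0|\geq t_n$, with an auxiliary threshold satisfying $1/\sqrt{n}\ll t_n\ll\lambda_b(n)$. Both errors are bounded by Chebyshev using uniformly bounded variances.

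\textbf{What your route changes.} Your one-sided midpoint threshold removes the need for the auxiliary sequence $t_n$ and gives the symmetric bounds $4\sigma_j^2/(n\Delta_n^2)$. Your gap estimate $\Delta_n\geq\lambda_b(n)/\mu$ needs no stability assumption. In the uniformly stable regime the two proofs are the same Chebyshev argument. What you gain is the near-critical regime $\lambda_w+\lambda_b(n)\to\mu$, which the statement allows but the paper's proof excludes. Your stochastic-monotonicity coupling is the right tool there.

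\textbf{Two points to tighten.} First, in the constant-gap case it is not enough that the $H_1$ variance be finite for each $n$. You need a bound uniform in $n$, which is exactly what comparison with a fixed $\rho'$ provides.

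Second, near criticality you must replace the threshold itself by a fixed one such as $(m_0+m')/2$ with $m'=1/(1-\rho')$, not merely the bound. With $\tau_n=(m_0+m_1(n))/2$ the test genuinely fails when $n(1-\rho_1(n))\to0$. In that regime $f_1(x)$ behaves like $x^{-3/2}$ up to $x$ of order $(1-\rho_1)^{-2}$, so $\overline{X}_n$ is typically of order $n\ll m_1(n)/2$, and $\beta_n\to1$.

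\textbf{Avoiding the case split.} You can drop the subsequence bookkeeping entirely, because $\sqrt{n}\lambda_b(n)\to\infty$ holds with no stability assumption. On any subsequence where it fails, $\lambda_b(n)\to0$, hence $E_{b,X}(n)=O(\sqrt{n})$ there, a contradiction. So take the single threshold $m_0+\tfrac12\min\{\Delta_n,1\}$. Chebyshev under $H_0$ gives $\alpha_n\to0$. For $\beta_n$, compare $H_1$ monotonically with the $\rho'(n)\leq\rho_1(n)$ whose mean is $m_0+\min\{\Delta_n,1\}$; its variance is bounded uniformly in $n$, so Chebyshev gives $\beta_n\to0$.
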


\begin{proof}
See Appendix~\ref{app:A}.
\end{proof}

Proposition~\ref{Proposition:stat1_achievability}, together with
\eqref{eq:scale_Eb}, establishes that the maximum achievable scaling of
$E_b(n)$  under statistic $U=X$ is $\sqrt{n}$, i.e.,
\begin{align}
\label{eq:Eb_X_max}
E_{b,X}^{*}(n)=\Theta(\sqrt{n}).
\end{align}
It follows from \eqref{eq:Eb_X_max} that the scaling constant $K_X$ defined in
\eqref{Eqn_KX_Definition} is finite and strictly positive. We provide an explicit characterization of $K_X$ in Theorem~\ref{Thm:KX}. The following lemma will be useful in proving Theorem~\ref{Thm:KX}.

\begin{lemma}
\label{lem:stat1_kl}
The distribution $f_j(x)$ of $X$ under hypothesis $H_j,j=0,1$, is parameterized by $\rho_j$ as defined in \eqref{eq:rhos}. In particular,
\begin{align}
\rho_1=\rho_0+\frac{\lambda_b}{\mu}.\nonumber
\end{align}
Applying the Taylor expansion in \eqref{eq:Taylor_KL}, we obtain
\begin{align}
D_{\mathrm{KL}}(f_0\|f_1)
&=
\frac{1}{2}
\frac{\mu}
{\lambda_w(\mu^2-\lambda_w^2)}
\lambda_b^2
+
O(\lambda_b^3).
\label{eq:statX_KL_taylor}
\end{align}
\end{lemma}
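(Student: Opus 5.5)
The plan is to read the expansion \eqref{eq:Taylor_KL} as the standard statement that the curvature of the KL divergence at $\gamma=0$ is the Fisher information of the family. The family is $\{f_\rho\}$ from \eqref{eq:number_stat_res}, with $\kappa=\rho_0$ and $\gamma=\lambda_b/\mu$. Write $D(\gamma)\triangleq D_{\mathrm{KL}}(f_{\rho_0}\|f_{\rho_0+\gamma})=\sum_x f_{\rho_0}(x)\log f_{\rho_0}(x)-\sum_x f_{\rho_0}(x)\log f_{\rho_0+\gamma}(x)$. Differentiating twice in $\gamma$ under the sum gives
\begin{align}
D''(0)=-\sum_{x\geq 1} f_{\rho_0}(x)\,\frac{\partial^2}{\partial\rho^2}\log f_{\rho}(x)\Big|_{\rho=\rho_0}=I(\rho_0).\nonumber
\end{align}
This is the Fisher information of $\rho$.

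The key computational step is that $\log f_\rho(x)=\log\bigl(\tfrac{1}{x}\binom{2x-2}{x-1}\bigr)+(x-1)\log\rho-(2x-1)\log(1+\rho)$ is linear in $x$. The combinatorial factor does not depend on $\rho$ and drops out, so
\begin{align}
-\frac{\partial^2}{\partial\rho^2}\log f_\rho(x)=\frac{x-1}{\rho^2}-\frac{2x-1}{(1+\rho)^2}.\nonumber
\end{align}
Taking expectations under $f_{\rho_0}$ needs only the mean $\mathbb{E}[X]=1/(1-\rho_0)$ from \eqref{eq:expectation_stat1}. This gives $\mathbb{E}[X-1]=\rho_0/(1-\rho_0)$ and $\mathbb{E}[2X-1]=(1+\rho_0)/(1-\rho_0)$. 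Substituting,
\begin{align}
I(\rho_0)=\frac{1}{\rho_0(1-\rho_0)}-\frac{1}{(1+\rho_0)(1-\rho_0)}=\frac{1}{\rho_0(1-\rho_0^2)}.\nonumber
\end{align}
Plugging $\gamma=\lambda_b/\mu$ and $\rho_0=\lambda_w/\mu$ into $\tfrac{\gamma^2}{2}I(\rho_0)$ yields $\tfrac{\lambda_b^2}{2\mu^2}\cdot\tfrac{\mu^3}{\lambda_w(\mu^2-\lambda_w^2)}$. This is exactly the leading term in \eqref{eq:statX_KL_taylor}. As a cross-check, I would also verify that $\mathbb{E}[\partial_\rho\log f_\rho]=\mathbb{E}[X-1]/\rho_0-\mathbb{E}[2X-1]/(1+\rho_0)=0$, which is consistent with the vanishing first derivative noted in the Remark.

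The step I expect to need the most care is justifying the Remark's smoothness assumption rather than the algebra. This has two parts: exchanging $\partial_\gamma$ with the infinite sum over $x$, and making the $O(\lambda_b^3)$ remainder genuine. I would handle both by restricting to $\gamma\in[0,\gamma_0]$ with $\rho_0+\gamma_0<1$. On this range every $\gamma$-derivative of $\log f_{\rho_0+\gamma}(x)$ up to order three is bounded by $C(1+x)$, with $C$ depending only on $\rho_0$ and $\gamma_0$. The tail of $f_{\rho_0}(x)$ decays geometrically, at rate $4\rho_0/(1+\rho_0)^2<1$, because $\binom{2x-2}{x-1}\leq 4^{x-1}$. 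Hence $\sum_x f_{\rho_0}(x)(1+x)<\infty$. Dominated convergence then justifies differentiating term by term. Taylor's theorem with Lagrange remainder gives $|D(\gamma)-\tfrac{\gamma^2}{2}I(\rho_0)|\leq \tfrac{\gamma^3}{6}\sup_{[0,\gamma_0]}|D'''|$, and the supremum is finite, which is the claimed $O(\lambda_b^3)$.
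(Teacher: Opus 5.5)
Your proof is correct, and the algebra checks out: $I(\rho_0)=1/(\rho_0(1-\rho_0^2))$, and with $\gamma=\lambda_b/\mu$ this gives exactly $\mu\lambda_b^2/(2\lambda_w(\mu^2-\lambda_w^2))$. Your argument and the paper's rest on the same observation: $\log(f_0(x)/f_1(x))=(x-1)\log c_1+(2x-1)\log c_2$ is affine in $x$, where $c_1=\lambda_w/(\lambda_w+\lambda_b)$ and $c_2=(\mu+\lambda_w+\lambda_b)/(\mu+\lambda_w)$, so only $\mathbb{E}_0[X]$ is needed. The difference is the order of operations.

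The paper sums first. It obtains the exact closed form $D_{\mathrm{KL}}(f_0\|f_1)=\frac{\lambda_w}{\mu-\lambda_w}\log c_1+\frac{\mu+\lambda_w}{\mu-\lambda_w}\log c_2$ and then expands $\log(1+x)$ to second order. This needs no interchange of derivative and sum and no remainder estimate: the $O(\lambda_b^3)$ term comes directly from the logarithm series. The exact formula also describes $D$ for $\lambda_b$ not small, which is what the converse of Theorem~\ref{Thm:KX} implicitly relies on.

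You differentiate first and sum afterwards. This identifies the constant as $\tfrac{1}{2\mu^2}I(\rho_0)$, half the Fisher information in $\lambda$, but it forces the dominated-convergence and Lagrange-remainder work. That work is correct, but for this statistic it is avoidable. Its advantage is that it still works when no closed form exists, which is exactly how the paper treats the busy-period duration in Lemma~\ref{lem:Tayor_D_KL_Stat2}.

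One small point concerns your computation $\mathbb{E}_0[\partial_\rho\log f_\rho]=0$. Since you restrict to $\gamma\in[0,\gamma_0]$, this computation is what actually gives $D'(0)=0$, not merely a cross-check. The Remark's minimum argument needs a two-sided neighbourhood of $\gamma=0$.
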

\begin{proof}
See Appendix~\ref{app:E}.
\end{proof}

\begin{theorem}
\label{Thm:KX}
Consider an $M/M/1$ queue with service rate $\mu$
and Willie job arrival rate $\lambda_w$, where
$0<\lambda_w<\mu$. Suppose that the statistic
provided to Willie for the $i$-th session is the
number of jobs $X_i$ served during that session.
Then, under the asymptotic KL-divergence
covertness constraint, the scaling constant
$K_X$ defined in \cref{Eqn_KX_Definition} is
given by
\begin{equation}
K_X
=
\sqrt{2}
\sqrt{
\frac{\mu(\mu+\lambda_w)}
{\lambda_w(\mu-\lambda_w)}
}.
\end{equation}
\end{theorem}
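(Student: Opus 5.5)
The plan is to prove matching lower and upper bounds on $\limsup_{n} E_{b,X}(n)/\sqrt{n\delta}$. Two ingredients do most of the work: the single-letter expansion of Lemma~\ref{lem:stat1_kl}, and the tensorization $D_{\mathrm{KL}}(f_0^n\|f_1^n)=n\,D_{\mathrm{KL}}(f_0\|f_1)$, which holds because the $X_i$ are i.i.d. under both hypotheses. Write
\[
A\triangleq\frac{\mu}{2\lambda_w(\mu^2-\lambda_w^2)},
\]
so that $D_{\mathrm{KL}}(f_0\|f_1)=A\lambda_b^2+O(\lambda_b^3)$. In \eqref{eq:expected_jobs}, $\mathbb{E}[X]$ is the mean under the operating hypothesis $H_1$, namely $\mu/(\mu-\lambda_w-\lambda_b)$ by \eqref{eq:expectation_stat1}. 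For $\lambda_b\to 0$ this gives the expansion
\[
E_{b,X}(n)=n\,\frac{\lambda_b}{\lambda_w}\,\frac{\mu}{\mu-\lambda_w}\,\bigl(1+O(\lambda_b)\bigr).
\]

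\emph{Achievability.} Take $\lambda_b(n)=c/\sqrt{n}$ with $c=\sqrt{\delta/A}$. Uniform stability then holds trivially for large $n$. The KL divergence becomes
\[
n\,D_{\mathrm{KL}}(f_0\|f_1)=Ac^2+O(n^{-1/2})\to\delta,
\]
so the covertness constraint \eqref{eq:delta_covert} is met. If one insists on the bound holding for every $n$ rather than only in the limit, use $c_\epsilon=(1-\epsilon)c$ and let $\epsilon\downarrow 0$ at the end. Plugging this sequence into the expansion of $E_{b,X}$ gives
\[
\frac{E_{b,X}(n)}{\sqrt{n\delta}}\to \frac{c}{\sqrt\delta}\cdot\frac{\mu}{\lambda_w(\mu-\lambda_w)}
=\sqrt{\frac{2\lambda_w(\mu^2-\lambda_w^2)}{\mu}}\cdot\frac{\mu}{\lambda_w(\mu-\lambda_w)}
=\sqrt2\sqrt{\frac{\mu(\mu+\lambda_w)}{\lambda_w(\mu-\lambda_w)}} .
\]

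\emph{Converse.} Take any sequence $\{\lambda_b(n)\}$ with $\limsup_n n\,D_{\mathrm{KL}}(f_0\|f_1)\le\delta$.

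The first step is to show $\lambda_b(n)\to0$. Otherwise some subsequence has $\lambda_b\ge\epsilon>0$. The map $\lambda_b\mapsto D_{\mathrm{KL}}(f_0\|f_1)$ is continuous and strictly positive for $\lambda_b>0$ (the map $\rho\mapsto f$ is injective, since the mean differs). It is also increasing near the stability boundary, or at least bounded away from $0$ on $[\epsilon,\mu-\lambda_w)$: as $\rho_1\to1$ the mass of $f_1$ escapes to large $x$, while $f_0$ has geometric tails. Hence $D_{\mathrm{KL}}\ge d_\epsilon>0$ along that subsequence, so $n\,D_{\mathrm{KL}}\to\infty$, a contradiction.

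With $\lambda_b\to0$, uniform stability and Lemma~\ref{lem:stat1_kl} both apply, and
\[
\delta\ge\limsup_n n\lambda_b(n)^2\bigl(A+O(\lambda_b(n))\bigr)=A\limsup_n n\lambda_b(n)^2 .
\]
Therefore $\limsup_n\sqrt n\,\lambda_b(n)\le c$. By the expansion of $E_{b,X}$, this yields $\limsup_n E_{b,X}(n)/\sqrt{n\delta}\le K_X$.

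Combined with achievability, the optimal $E^*_{b,X}(n)$ satisfies $E^*_{b,X}(n)/\sqrt{n\delta}\to K_X$, so the $\liminf$ in \eqref{Eqn_KX_Definition} equals $K_X$.

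\emph{Main obstacle.} The delicate step is the converse's reduction to $\lambda_b\to0$ together with control of the remainder. The $O(\lambda_b^3)$ term in Lemma~\ref{lem:stat1_kl} must be uniform on a neighbourhood of $0$. My first approach is to note that, by \eqref{eq:number_stat_res}, $D_{\mathrm{KL}}(f_0\|f_1)$ is an explicit smooth function of $\rho_1$:
\[
\mathbb{E}_0[X-1]\log\frac{\rho_0}{\rho_1}-\mathbb{E}_0[2X-1]\log\frac{1+\rho_0}{1+\rho_1},
\]
because the binomial factors cancel. Taylor's theorem with a bounded third derivative on $[\rho_0,\rho_0+\eta']$ then gives uniformity directly. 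The same closed form also makes the positivity and divergence needed for the case $\lambda_b\not\to0$ transparent.
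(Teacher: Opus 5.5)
Your proposal is correct and follows the paper's proof essentially step for step. It uses the same single-letter expansion with tensorization, the same achievability choice $\lambda_b(n)=\sqrt{\delta/(nA)}$, and the same converse via $\lambda_b(n)\to0$ followed by $\limsup_n n\lambda_b(n)^2\le\delta/A$, and your closed-form remark supplies the uniformity of the remainder that the paper leaves implicit.

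One heuristic is wrong, though the conclusion it supports survives. As $\rho_1\to1$ no mass of $f_1$ escapes: the critical busy-period count is still a proper distribution, so $D_{\mathrm{KL}}(f_0\|f_1)$ tends to a finite limit rather than diverging. The lower bound you need holds anyway, because your closed form gives $\frac{d}{d\lambda_b}D_{\mathrm{KL}}(f_0\|f_1)=\frac{\mu\lambda_b}{(\mu-\lambda_w)(\lambda_w+\lambda_b)(\mu+\lambda_w+\lambda_b)}>0$. The divergence is therefore increasing in $\lambda_b$, and for every $\lambda_b\ge\epsilon$ it is at least its positive value at $\lambda_b=\epsilon$.
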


\begin{proof}
First let us define
\begin{align}
c_X
\triangleq
\frac{\mu}
{2\lambda_w(\mu^2-\lambda_w^2)}.
\end{align}
From \cref{eq:statX_KL_taylor}, as
$\lambda_b\to 0$,
\begin{align}
D_{\mathrm{KL}}(f_0\|f_1)
=
c_X\lambda_b^2+o(\lambda_b^2).
\label{eq:KX_KL_expansion}
\end{align}
Moreover, the expected number of jobs inserted
by Bob over $n$ sessions is
\begin{align}
E_{b,X}(n)
&=
n\frac{\lambda_b(n)}
{\lambda_w+\lambda_b(n)}
\mathbb{E}[X]
\nonumber\\
&=
\frac{n\mu\lambda_b(n)}
{(\lambda_w+\lambda_b(n))
(\mu-\lambda_w-\lambda_b(n))}.
\label{eq:KX_Eb}
\end{align}

\emph{Achievability:}
Choose the insertion rate
\begin{align}
\lambda_b(n)
=
\sqrt{\frac{\delta}{nc_X}}.
\label{eq:KX_ach_rate}
\end{align}
Since the session statistics are i.i.d.,
\cref{eq:KX_KL_expansion} yields
\begin{align}
D_{\mathrm{KL}}
(f_0^{n}\|f_1^{n})
&=
nD_{\mathrm{KL}}(f_0\|f_1)
\nonumber\\
&=
nc_X\lambda_b(n)^2+o(1)
\nonumber\\
&=
\delta+o(1).
\end{align}
Thus, the asymptotic KL-divergence
covertness constraint is satisfied.

Furthermore, substituting
\cref{eq:KX_ach_rate} into \cref{eq:KX_Eb}
gives
\begin{align}
\lim_{n\to\infty}
\frac{E_{b,X}(n)}{\sqrt{n\delta}}
&=
\frac{\mu}
{\lambda_w(\mu-\lambda_w)\sqrt{c_X}}
\nonumber\\
&=
\sqrt{2}
\sqrt{
\frac{\mu(\mu+\lambda_w)}
{\lambda_w(\mu-\lambda_w)}
}.
\end{align}
This establishes the achievability bound.

\emph{Converse:}
Consider any sequence of insertion rates
$\{\lambda_b(n)\}_{n\geq 1}$ satisfying
the asymptotic KL-divergence covertness
constraint
\begin{align}
\limsup_{n\to\infty}
D_{\mathrm{KL}}
(f_0^{n}\|f_1^{n})
\leq \delta.
\label{eq:KX_KL_constraint}
\end{align}
Since the divergence vanishes only when
$\lambda_b=0$ and is bounded away from zero
for insertion rates bounded away from zero,
\cref{eq:KX_KL_constraint} implies
$\lambda_b(n)\to 0$.

Consequently, from
\cref{eq:KX_KL_expansion}, we obtain
\begin{align}
\limsup_{n\to\infty}
n\lambda_b(n)^2
\leq
\frac{\delta}{c_X}.
\label{eq:KX_conv_rate}
\end{align}
Using \cref{eq:KX_Eb} and
\cref{eq:KX_conv_rate}, it follows that
\begin{align}
\limsup_{n\to\infty}
\frac{E_{b,X}(n)}{\sqrt{n\delta}}
&\leq
\frac{\mu}
{\lambda_w(\mu-\lambda_w)\sqrt{c_X}}
\nonumber\\
&=
\sqrt{2}
\sqrt{
\frac{\mu(\mu+\lambda_w)}
{\lambda_w(\mu-\lambda_w)}
}.
\end{align}
Combining the achievability and converse
bounds completes the proof.
\end{proof}

\subsection{Results on Statistic $U=Y$}
\label{sec:statistic2}

When the revealed statistic $U$ is the duration of each session, i.e., $U=Y$, its distribution is given by \cref{eq:length_stat}. Denoting the distribution of $Y$ under hypothesis $H_j$, $j = 0,1$, by $g_j(y)$, we have
\begin{equation}
\label{eq:length_stat_res}
g_j(y)
=
e^{-(\xi_j+\mu)y}
\sqrt{\frac{\mu}{\xi_j}}
\frac{I_1\left(2y\sqrt{\xi_j\mu}\right)}{y},
\qquad y>0,
\end{equation}
where $I_1(\cdot)$
denotes the first-order modified Bessel function, and
\begin{align}
\label{eq:lambda_under_hj}
\xi_0={\lambda_w},
\qquad
\xi_1={\lambda_w+\lambda_b}. 
\end{align}

In \Cref{Proposition:stat2}, we show that $E_{b,Y}(n)$ obeys the square-root law, analogous to $E_{b,X}(n)$. To this end, we first derive a Taylor series approximation of $D_{\mathrm{KL}}(g_0|g_1)$ in \Cref{lem:Tayor_D_KL_Stat2}. We point out that, due to the presence of the Bessel function term in $g_j(y)$, $j=0,1$, deriving this Taylor expansion is non-trivial. Therefore, the detailed derivation is provided in Appendix~\ref{app:C}.

\begin{lemma}
\label{lem:Tayor_D_KL_Stat2}

For the densities $g_0(y)$ and $g_1(y)$ in \cref{eq:length_stat_res}, the second-order Taylor expansion of $D_{\mathrm{KL}}(g_0\|g_1)$ is given by
\begin{equation}
D_{\mathrm{KL}}(g_0\|g_1)
=
C(\lambda_w, \mu) \lambda_b^2+o(\lambda_b^2),
\label{Eqn:Tayor_D_KL_Stat2}
\end{equation}
where
\begin{equation}
C(\lambda_w, \mu) \triangleq
\frac{1}{2}
\left[
-\frac{1}{\lambda_w^2}
-\frac{2\mu^2}{\lambda_w(\mu-\lambda_w)^3}
+\frac{\mu^{3/2}}{\lambda_w^{3/2}}J(\lambda_w, \mu)
\right]
\label{Eq:C_lambda_mu_remark}
\end{equation}
and
\begin{align}
J(\lambda_w, \mu) \triangleq \int_0^\infty
y e^{-(\mu+\lambda_w)y}
\frac{I_0^2(2y\sqrt{\mu \lambda_w})}{I_1(2y\sqrt{\mu \lambda_w})}
\,dy.
\label{Eq:J_lambda_mu_remark}
\end{align}
Note that $C(\lambda_w, \mu)$ and $J(\lambda_w, \mu)$ are functions of $\lambda_w$ and $\mu$ and can be evaluated numerically.
\end{lemma}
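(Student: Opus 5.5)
The plan is to identify the coefficient $C(\lambda_w,\mu)$ as one half of the Fisher information of the family $\{g_\xi\}$ in \eqref{eq:length_stat_res} at $\xi=\lambda_w$, and then evaluate that Fisher information in closed form except for one Bessel integral. By the Taylor expansion \eqref{eq:Taylor_KL} with $\kappa=\xi_0=\lambda_w$ and $\gamma=\lambda_b$, we have $D_{\mathrm{KL}}(g_0\|g_1)=\tfrac{\lambda_b^2}{2}\,\partial_\gamma^2 D_{\mathrm{KL}}(g_{\lambda_w}\|g_{\lambda_w+\gamma})|_{\gamma=0}+o(\lambda_b^2)$. Under regularity (differentiation under the integral), this second derivative equals the Fisher information
\begin{align}
\mathcal{I}(\xi)=\mathbb{E}_\xi\!\left[s_\xi(Y)^2\right],\qquad s_\xi(y)=\partial_\xi \log g_\xi(y),\nonumber
\end{align}
evaluated at $\xi=\lambda_w$. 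So it suffices to show $\mathcal{I}(\lambda_w)=2C(\lambda_w,\mu)$.

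\emph{Computing the score.} From $\log g_\xi(y)=-(\xi+\mu)y+\tfrac12\log\mu-\tfrac12\log\xi+\log I_1(z)-\log y$, with $z=2y\sqrt{\xi\mu}$, differentiate in $\xi$. Use $\partial_\xi z=y\sqrt{\mu/\xi}$ and the Bessel identity $I_1'(z)=I_0(z)-I_1(z)/z$. The two $-1/(2\xi)$ contributions combine, giving
\begin{align}
s_\xi(y)=-y-\frac{1}{\xi}+y\sqrt{\frac{\mu}{\xi}}\,\frac{I_0(z)}{I_1(z)}.\nonumber
\end{align}
Expanding $s_\xi^2$ produces three terms.

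First, $\mathbb{E}[(Y+1/\xi)^2]$. This is computed from the busy-period moments $\mathbb{E}[Y]=1/(\mu-\xi)$ and $\mathbb{E}[Y^2]=2\mu/(\mu-\xi)^3$. The second moment is obtained from the Laplace transform of \eqref{eq:length_stat}, or from the known busy-period variance.

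Second, the squared Bessel term. Multiplying by $g_\xi$ cancels one factor $I_1$ and one factor $y$, so it equals exactly $(\mu/\xi)^{3/2}J(\xi,\mu)$, which is the integral in \eqref{Eq:J_lambda_mu_remark}.

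Third, the cross term $-2\,\mathbb{E}[(Y+1/\xi)Y\sqrt{\mu/\xi}\,I_0/I_1]$. After the same cancellation it becomes
\begin{align}
-2\frac{\mu}{\xi}\int_0^\infty (y+1/\xi)\,e^{-(\xi+\mu)y}I_0(2y\sqrt{\xi\mu})\,dy.\nonumber
\end{align}
This is elementary via the Laplace transforms $\int_0^\infty e^{-py}I_0(ay)\,dy=(p^2-a^2)^{-1/2}$ and $\int_0^\infty ye^{-py}I_0(ay)\,dy=p\,(p^2-a^2)^{-3/2}$, with $p=\xi+\mu$, $a=2\sqrt{\xi\mu}$ and $p^2-a^2=(\mu-\xi)^2$.

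Summing the three pieces and simplifying the rational terms should yield $-1/\xi^2-2\mu^2/(\xi(\mu-\xi)^3)+(\mu/\xi)^{3/2}J$. Setting $\xi=\lambda_w$ then gives $2C(\lambda_w,\mu)$. As a check, $\mathbb{E}[s_\xi]=0$ gives $\sqrt{\mu/\xi}\,\mathbb{E}[Y I_0/I_1]=\mathbb{E}[Y]+1/\xi$, which I will use to verify the algebra.

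\emph{Main obstacle: rigor of the expansion.} We must justify three things: (a) finiteness of $J$; (b) exchanging $\partial_\xi$ and $\int$ twice, so that the Hessian of the KL divergence equals the Fisher information; and (c) that the remainder is $o(\lambda_b^2)$ uniformly. For (a) and (b), I would use the asymptotics $I_0(z)/I_1(z)\to1$ as $z\to\infty$ and $I_1(z)\sim z/2$, $I_0(z)\to1$ as $z\to0$. These show that $s_\xi$ and $\partial_\xi s_\xi$ grow at most linearly in $y$, locally uniformly in $\xi$. Meanwhile $g_\xi(y)$ decays like $e^{-(\sqrt\mu-\sqrt\xi)^2y}$ times a power of $y$. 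This gives integrable dominating functions for $\xi$ in a compact neighbourhood of $\lambda_w$ inside $(0,\mu)$. For (c), I would bound the third derivative of $\gamma\mapsto D_{\mathrm{KL}}(g_{\lambda_w}\|g_{\lambda_w+\gamma})$ on that neighbourhood by the same domination argument, which yields the $O(\lambda_b^3)$ remainder and hence \eqref{Eqn:Tayor_D_KL_Stat2}.
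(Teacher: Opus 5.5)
Your proposal is correct, and the algebra you left as "should yield" does close. The term $\mathbb{E}[(Y+1/\xi)^2]$ contributes $\frac{2\mu}{(\mu-\xi)^3}+\frac{2}{\xi(\mu-\xi)}+\frac{1}{\xi^2}$. The cross term contributes $-\frac{2\mu(\mu+\xi)}{\xi(\mu-\xi)^3}-\frac{2\mu}{\xi^2(\mu-\xi)}$. Together these sum to $-\frac{1}{\xi^2}-\frac{2\mu^2}{\xi(\mu-\xi)^3}$, and the squared Bessel term gives exactly $(\mu/\xi)^{3/2}J$.

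Your route differs from the paper's in how the quadratic coefficient is computed. The paper does not use the squared score for the expansion. Instead it differentiates $D(\lambda_b)=\int g_0\log(g_0/g_{\lambda_b})\,dy$ twice, which amounts to computing $-\mathbb{E}_0[\partial_\xi^2\log g_\xi]$. This forces it to differentiate the ratio $I_0(u_1)/I_1(u_1)$ via $(I_0/I_1)'=1-(I_0/I_1)^2+I_0/(zI_1)$. It must also verify $D'(0)=0$ explicitly, track a cancellation of the $B(\lambda_b)$ terms, and pass to the limit $Q(\lambda_b)\to J$ by dominated convergence. In the paper, $J$ arises from the $(I_0/I_1)^2$ piece of that derivative.

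Your expected-squared-score form needs only the first derivative of $\log g_\xi$. It produces $J$ directly from squaring the Bessel term and reduces everything else to two standard Laplace transforms of $I_0$. It also makes $C\ge 0$ immediate; the paper switches to $2C=\mathbb{E}_0[s(Y)^2]$ only afterwards, to prove $C>0$.

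The price is one interchange the paper's expansion does not need. Identifying the KL Hessian with $\mathbb{E}[s^2]$ requires $\int\partial_\xi^2 g_\xi\,dy=0$, that is, domination of $g_\xi\,(s_\xi^2+\partial_\xi s_\xi)$ uniformly for $\xi$ near $\lambda_w$. The paper only dominates $g_0$-weighted integrands. Your growth estimates suffice for this: $s_\xi$ and $\partial_\xi s_\xi$ grow at most linearly, and $g_\xi$ decays like $y^{-3/2}e^{-(\sqrt{\mu}-\sqrt{\xi})^2y}$.

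Finally, your step (c) does more than required. Twice differentiability of $D$ at $0$ already gives the $o(\lambda_b^2)$ remainder by Peano's form of Taylor's theorem, which is what the paper effectively uses. No third-derivative bound is needed.
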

\begin{proof}
    See Appendix~\ref{app:C}.
\end{proof}

The following proposition establishes the square-root law for $E_{b,Y}^{*}(n)$.
\begin{proposition}
\label{Proposition:stat2}
Suppose that the statistic provided by server Bob to Willie for each session $S_i$ is the length of each session $Y_i$. Then, 
\begin{align}
  E_{b,Y}^{*}(n)=\Theta(\sqrt{n})  
\end{align}
In particular, Bob can covertly insert an expected number
    of non-Willie jobs satisfying
    \begin{equation}
    E_{b,Y}(n)=\Omega(\sqrt{n}).
    \label{eq:eb_y_stat2}
    \end{equation}
Further, if Bob inserts an expected number of
    jobs satisfying
    \begin{equation}
    E_{b,Y}(n)=\omega(\sqrt{n}),
    \end{equation}
    then there exists a hypothesis test for Willie such that
    $\alpha_n+\beta_n\to 0$
    as $n\to\infty$. 
\end{proposition}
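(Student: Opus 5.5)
The proof has two parts: achievability of order $\sqrt{n}$ via the Taylor expansion in \Cref{lem:Tayor_D_KL_Stat2}, and a converse showing that any $\omega(\sqrt{n})$ insertion is detectable.

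\emph{Achievability.} I would mirror the achievability part of Theorem~\ref{Thm:KX}, using \Cref{lem:Tayor_D_KL_Stat2} in place of Lemma~\ref{lem:stat1_kl}.
\begin{enumerate}
\item First, I need $C(\lambda_w,\mu)>0$. Since $C$ is one half of the Fisher information of the family $\{g_\xi\}$ at $\xi=\lambda_w$, it is nonnegative. It is strictly positive because the map $\xi\mapsto g_\xi$ is locally injective: the mean $1/(\mu-\xi)$ has nonzero derivative in $\xi$, and a Cram\'er--Rao type argument then gives $I(\xi)\ge (d\mathbb{E}_\xi[Y]/d\xi)^2/\mathrm{Var}_\xi(Y)>0$.
\item Choose $\lambda_b(n)=\sqrt{\delta/(nC(\lambda_w,\mu))}$. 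Since the $Y_i$ are i.i.d., $D_{\mathrm{KL}}(g_0^n\|g_1^n)=nD_{\mathrm{KL}}(g_0\|g_1)=\delta+o(1)$, so the covertness constraint holds.
\item The queue is uniformly stable for large $n$, so Proposition~\ref{prop:ebn_lambda_b} gives $E_{b,Y}(n)=n\Theta(\lambda_b(n))=\Theta(\sqrt{n})$, which is \eqref{eq:eb_y_stat2}.
\end{enumerate}

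\emph{Converse.} I would build an explicit test from the sample mean $\bar Y_n=\frac1n\sum_i Y_i$, following the approach I expect Appendix~\ref{app:A} to use for statistic $X$.
\begin{enumerate}
\item If $\lambda_b(n)$ does not tend to zero along some subsequence, then along that subsequence the two hypotheses are separated by a fixed gap in means. The sample-mean test below then drives $\alpha_n+\beta_n\to0$ by the law of large numbers, so assume $\lambda_b(n)\to0$.
\item By Proposition~\ref{prop:ebn_lambda_b}, $E_{b,Y}(n)=\omega(\sqrt{n})$ means $\lambda_b(n)=\omega(1/\sqrt{n})$. If uniform stability fails, $\mathbb{E}_1[Y]\to\infty$ and detection is even easier, so this case is handled the same way.
\item Under $H_j$, $\mathbb{E}[Y]=m_j=1/(\mu-\xi_j)$. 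Hence $m_1-m_0=\lambda_b/((\mu-\lambda_w)(\mu-\lambda_w-\lambda_b))=\Theta(\lambda_b)$.
\item The variances $\sigma_j^2=\mathrm{Var}(Y)=\mu+\xi_j$ divided by $(\mu-\xi_j)^3$ are bounded uniformly in $n$.
\item Willie decides $H_1$ iff $\bar Y_n> (m_0+m_1)/2$. Chebyshev's inequality gives
\[
\alpha_n+\beta_n\le \frac{4(\sigma_0^2+\sigma_1^2)}{n(m_1-m_0)^2}=O\!\left(\frac{1}{n\lambda_b(n)^2}\right)\to0 .
\]
\end{enumerate}
Combining the two parts gives $E_{b,Y}^*(n)=\Theta(\sqrt{n})$.

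\emph{Main obstacle.} The delicate step is the achievability, which relies on $C(\lambda_w,\mu)$ being strictly positive and finite and on the remainder in \Cref{lem:Tayor_D_KL_Stat2} being uniform enough that $n\,o(\lambda_b(n)^2)=o(1)$. The expression for $C$ involves the integral $J$ and has negative terms, so positivity is not visible directly. I would derive it from the Fisher-information interpretation together with the Cram\'er--Rao bound above. Finiteness of $J$ follows from the asymptotics $I_0^2/I_1\sim e^{2y\sqrt{\mu\lambda_w}}/\sqrt{4\pi y\sqrt{\mu\lambda_w}}$: the integrand then decays like $\sqrt{y}\,e^{-(\sqrt\mu-\sqrt{\lambda_w})^2y}$, and near $y=0$ it behaves like $1/\sqrt{\mu\lambda_w}$, so it is integrable at both ends.
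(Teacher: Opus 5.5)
Your proposal is correct and follows essentially the paper's route: achievability comes from Lemma~\ref{lem:Tayor_D_KL_Stat2} with $\lambda_b(n)\propto 1/\sqrt{n}$ plus Proposition~\ref{prop:ebn_lambda_b}, with $C(\lambda_w,\mu)>0$ obtained exactly as in Appendix~\ref{app:C} from the Fisher-information/Cauchy--Schwarz bound $2C\geq 1/(\mu^2-\lambda_w^2)$, and the converse is a Chebyshev analysis of the sample mean of the $Y_i$, where your midpoint threshold $(m_0+m_1)/2$ is a cleaner substitute for the paper's rule $|\overline{S}-m_0|\geq t_n$. The one inaccurate remark is your aside that loss of uniform stability is ``handled the same way'': there $\sigma_1^2/(m_1-m_0)^2$ grows like $(\mu-\lambda_w-\lambda_b(n))^{-1}$, so your Chebyshev bound need not vanish and you would need something else (e.g.\ a fixed threshold plus stochastic monotonicity of $Y$ in the arrival rate), but the paper's proof simply assumes uniform stability at this point, so nothing it establishes is missing from yours.
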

\begin{proof}
See Appendix~\ref{app:B}.
\end{proof}

\begin{theorem}
\label{Thm:KY}
Consider an $M/M/1$ queue with service rate $\mu$ and Willie job arrival rate $\lambda_w$, where $0<\lambda_w<\mu$. Suppose that the statistic provided to Willie for the $i$-th session is the length of each session $Y_i$. Then, the scaling constant $K_Y$, defined in \eqref{Eqn_KX_Definition}, is given by
\begin{equation}
\label{eq:ky_eq}
K_Y
=
\frac{\mu}
{\lambda_w(\mu-\lambda_w)\sqrt{C(\lambda_w, \mu)}},
\end{equation}
where $C(\lambda_w, \mu)$ is a constant defined in \cref{Eq:C_lambda_mu_remark}.
\end{theorem}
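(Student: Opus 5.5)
The plan is to mirror the proof of Theorem~\ref{Thm:KX}, replacing the expansion of Lemma~\ref{lem:stat1_kl} by that of Lemma~\ref{lem:Tayor_D_KL_Stat2}. The expression for $E_{b,Y}(n)$ in \eqref{eq:expected_jobs} involves only $\lambda_b(n)$ and $\mathbb{E}[X]$, so it coincides with \eqref{eq:KX_Eb}:
\begin{align}
E_{b,Y}(n)=\frac{n\mu\lambda_b(n)}{(\lambda_w+\lambda_b(n))(\mu-\lambda_w-\lambda_b(n))}.\nonumber
\end{align}
Hence, whenever $\lambda_b(n)\to 0$, we have $E_{b,Y}(n)=\frac{n\mu\lambda_b(n)}{\lambda_w(\mu-\lambda_w)}(1+o(1))$. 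Since the $Y_i$ are i.i.d., $D_{\mathrm{KL}}(g_0^n\|g_1^n)=nD_{\mathrm{KL}}(g_0\|g_1)$, and by \eqref{Eqn:Tayor_D_KL_Stat2} this equals $nC(\lambda_w,\mu)\lambda_b^2+n\,o(\lambda_b^2)$.

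\emph{Achievability.} Take $\lambda_b(n)=\sqrt{\delta/(nC(\lambda_w,\mu))}$. This choice requires $C(\lambda_w,\mu)>0$. That positivity follows because $C$ is half the Fisher information of the family $\{g_\xi\}$ at $\xi=\lambda_w$, and this information is nonzero: the mean $1/(\mu-\xi)$ depends strictly on $\xi$, so the family is locally identifiable. With this choice, $n\,o(\lambda_b^2)=o(1)$, so the divergence tends to $\delta$ and the covertness constraint \eqref{eq:delta_covert} holds. Substituting into the asymptotic expression for $E_{b,Y}(n)$ gives
\begin{align}
\frac{E_{b,Y}(n)}{\sqrt{n\delta}}\to\frac{\mu}{\lambda_w(\mu-\lambda_w)\sqrt{C(\lambda_w,\mu)}}.\nonumber
\end{align}

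\emph{Converse.} Take any covert sequence $\{\lambda_b(n)\}$. First I would show that $\lambda_b(n)\to 0$. If it did not, then along a subsequence $\lambda_b(n)\ge\epsilon_0>0$. On that subsequence $D_{\mathrm{KL}}(g_0\|g_1)$ is bounded below by a positive constant, by continuity of the divergence in $\xi_1$ and identifiability of the family (distinct $\xi$ give distinct means). The total divergence would then grow linearly in $n$, contradicting the constraint. Proposition~\ref{Proposition:stat2} gives an alternative route: it already forces $E_{b,Y}(n)=O(\sqrt n)$, and by Proposition~\ref{prop:ebn_lambda_b} this gives $\lambda_b(n)=O(1/\sqrt n)$. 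Once $\lambda_b(n)\to0$, the expansion yields $\limsup_n nC\lambda_b(n)^2(1+o(1))\le\delta$, that is, $\limsup_n n\lambda_b(n)^2\le\delta/C$. Plugging this into the asymptotic form of $E_{b,Y}(n)$ gives the matching upper bound on $\limsup_n E_{b,Y}(n)/\sqrt{n\delta}$, and hence on $E^*_{b,Y}(n)/\sqrt{n\delta}$. Combining the two bounds gives \eqref{eq:ky_eq}.

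The main obstacle is justifying the uniform use of the expansion. Lemma~\ref{lem:Tayor_D_KL_Stat2} is stated for fixed $\lambda_w,\mu$ as $\lambda_b\to0$, so the remainder $o(\lambda_b^2)$ should be interpreted as $r(\lambda_b)\lambda_b^2$ with $r\to0$. That form suffices once $\lambda_b(n)\to0$ is known, so the real work is the step $\lambda_b(n)\to0$ in the converse. I would handle it through the lower semicontinuity and identifiability argument above, or simply by invoking Proposition~\ref{Proposition:stat2} together with Proposition~\ref{prop:ebn_lambda_b}. Verifying $C(\lambda_w,\mu)>0$ from the explicit formula \eqref{Eq:C_lambda_mu_remark} would be awkward. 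Instead I would take $C$ to be the Fisher-information term from Lemma~\ref{lem:Tayor_D_KL_Stat2}, which is nonnegative, and strictly positive by identifiability.
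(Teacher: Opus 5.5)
Your proposal is correct and matches the paper's proof step for step: the same achievability choice $\lambda_b(n)=\sqrt{\delta/(nC(\lambda_w,\mu))}$, the same converse via $\lambda_b(n)\to0$ followed by $\limsup_n n\lambda_b(n)^2\le\delta/C(\lambda_w,\mu)$, and the same Fisher-information argument for $C>0$ (this is how Appendix~\ref{app:C} proves it). Two small corrections: identifiability alone does not force the Fisher information to be positive, and what does is the Cauchy--Schwarz (Cram\'er--Rao) bound $\mathbb{E}_0[s(Y)^2]\ge\bigl(\tfrac{d}{d\xi}\mathbb{E}_\xi[Y]\bigr)^2/\operatorname{Var}_0(Y)=1/(\mu^2-\lambda_w^2)$, which uses the nonzero derivative of the mean; also, your alternative route through Propositions~\ref{Proposition:stat2} and~\ref{prop:ebn_lambda_b} presupposes uniform stability, which an arbitrary covert sequence need not satisfy, so you should rely on the direct divergence lower-bound argument for $\lambda_b(n)\to0$.
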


\begin{proof}
The proof follows along similar lines to that of Theorem~\ref{Thm:KX}. The detailed steps are provided in Appendix~\ref{app:proof_thm_KY}.
\end{proof}

\section{Discussion}
\label{sec:Discussion}

\begin{figure*}[ht]
\centering

\begin{minipage}{0.45\textwidth}
    \centering
    \begin{tikzpicture}
\begin{axis}[
    width= 7.5cm,
    legend style={at={(0.3,0.25)},anchor=west},
    ylabel={\footnotesize{$K_X, K_Y$}},
    xlabel={\footnotesize{$\lambda_w$}},
    xmin=0.05,
    xmax=4.95,
    grid=both,
    xtick={0.5,1,1.5,2,2.5,3,3.5,4,4.5},
    xticklabels={
        \scriptsize{0.5},
        \scriptsize{1},
        \scriptsize{~},
        \scriptsize{2},
        \scriptsize{~},
        \scriptsize{3},
        \scriptsize{~},
        \scriptsize{4},
        \scriptsize{~}
    }
]

\addplot[
    color=green,
    very thick,
    mark=o,
    mark options={solid},
    mark size=1.5pt
]
table[
    x=lambda_w,
    y=KX,
    col sep=comma
]{figures/kx_vs_muky.csv};
\addlegendentry{\scriptsize{$K_X$}}

\addplot[
    color=red,
    very thick,
    dashed,
    mark=x,
    mark size=1.5pt
]
table[
    x=lambda_w,
    y=mu_KY,
    col sep=comma
]{figures/kx_vs_muky.csv};
\addlegendentry{\scriptsize{$K_Y$}}

\end{axis}
\end{tikzpicture}
\end{minipage}
\hfill
\begin{minipage}{0.45\textwidth}
    \centering
    \begin{tikzpicture}
\begin{axis}[
    width=7.5cm,
    legend style={at={(0.3,0.25)},anchor=west},
    ylabel={\footnotesize{$K_X, K_Y$}},
    xlabel={\footnotesize{$\lambda_w$}},
    xmin=1.03,
    xmax=3.97,
    grid=both,
    xtick={1,1.5,2,2.5,3,3.5},
    xticklabels={
        \scripAtsize{1},
        \scriptsize{~},
        \scriptsize{2},
        \scriptsize{~},
        \scriptsize{3},
        \scriptsize{~}
    }
]

\addplot[
    color=green,
    very thick,
    mark=o,
    mark options={solid},
    mark size=1.5pt
]
table[
    x=lambda_w,
    y=KX,
    col sep=comma
]{figures/kx_vs_muky_zoomed.csv};
\addlegendentry{\scriptsize{$K_X$}}

\addplot[
    color=red,
    very thick,
    dashed,
    mark=x,
    mark size=1.5pt
]
table[
    x=lambda_w,
    y=mu_KY,
    col sep=comma
]{figures/kx_vs_muky_zoomed.csv};
\addlegendentry{\scriptsize{$K_Y$}}

\end{axis}
\end{tikzpicture}
\end{minipage}

\caption{Comparison of $K_X$ and $K_Y$ for $\mu=5$. The right plot shows a zoomed view.}
\label{fig:kx_ky_comparison}
\end{figure*}

The scaling constants $K_X$ and $K_Y$, characterized in
\cref{Thm:KX,Thm:KY}, respectively, can be interpreted
as the \emph{covert queueing capacities} from Bob's
perspective when Willie observes statistic $X$ or $Y$.
Intuitively, a statistic that provides Willie with more
information about Bob's activity enables more effective
detection, thereby restricting the number of jobs Bob
can covertly insert while satisfying the covertness
constraint. Thus, comparing $K_X$ and $K_Y$ allows us
to assess the relative effectiveness of the two
statistics for detecting Bob's activity. We next show
that $X$ is more informative than $Y$ in terms of
KL divergence and consequently establish that
\begin{equation}
    K_X \leq K_Y.
\end{equation}

Recall that $X$ denotes the number of jobs served
during a busy period, whereas $Y$ denotes its duration.
In an $M/M/1$ queue with service rate $\mu$, the
service times are independent and identically
distributed (i.i.d.) exponential random variables
with rate $\mu$. Consequently, the busy-period
duration can be expressed as
\begin{equation}
    Y=\sum_{\ell=1}^{X} Z_\ell,
\end{equation}
where $\{Z_\ell\}_{\ell\geq 1}$ are i.i.d. random
variables following $\operatorname{Exp}(\mu)$.

Since the service rate $\mu$ is independent of
whether Bob inserts additional jobs, the conditional
distribution of $Y$ given $X=k$ is identical under
both hypotheses. In particular,
\begin{align}
g_{Y|X,H_0}(y|k)
&=g_{Y|X,H_1}(y|k) \nonumber\\
&=\operatorname{Erlang}(k,\mu),
\qquad y>0,
\end{align}
which is the density of an Erlang distribution
with shape parameter $k$ and rate parameter
$\mu$~\cite[Sec.~5.2.4]{Ross}.

Consider the binary hypothesis testing problem
in \cref{Eq:Hypothesis_setup}, and let
$\theta\in\{0,1\}$ denote the hypothesis index.
Since the conditional distribution of $Y$ given
$X$ does not depend on $\theta$, the random
variables form the Markov chain
\begin{equation}
    \theta \longrightarrow X \longrightarrow Y.
\end{equation}
Therefore, by the data processing inequality
for KL divergence~\cite[Thm.~2.15]
{polyanskiy2025information},
\begin{equation}
D_{\mathrm{KL}}(g_0\|g_1)
\leq
D_{\mathrm{KL}}(f_0\|f_1).
\label{Eq:DPI_KL}
\end{equation}
Thus, observing the number of jobs served
during a busy period is at least as informative
as observing its duration in terms of
KL divergence.

We now use \cref{Eq:DPI_KL} to compare the
corresponding covert queueing constants.
Fix $\delta>0$ and consider any sequence
of arrival rates $\{\lambda_b(n)\}_{n\geq1}$
satisfying the $\delta$-covertness constraint
for statistic $X$, namely,
\begin{align}
\limsup_{n\to\infty}
D_{\mathrm{KL}}
(f_0^{n}\|f_1^{n})
\leq \delta.
\end{align}
By \cref{Eq:DPI_KL}, the same sequence
necessarily satisfies
\begin{align}
\limsup_{n\to\infty}
D_{\mathrm{KL}}
(g_0^{\otimes n}\|g_1^{\otimes n})
\leq \delta.
\end{align}
Hence, every arrival-rate sequence that is
$\delta$-covert with respect to statistic $X$
is also $\delta$-covert with respect to
statistic $Y$.

Consequently, the set of admissible arrival-rate
sequences under statistic $X$ is contained in
the corresponding set under statistic $Y$.
Since the expected number of additional jobs
inserted by Bob is the same for a given
arrival-rate sequence, irrespective of which
statistic Willie observes, the optimal expected
number of covert jobs under statistic $Y$
cannot be smaller than that under statistic $X$.
Therefore, after normalization by
$\sqrt{n\delta}$ and taking the limit inferior
as in \cref{Eqn_KX_Definition}, we obtain
\begin{equation}
    K_X\leq K_Y.
\end{equation}

This inequality establishes that, under the
same covertness constraint, Bob can covertly
insert at least as many expected additional jobs
when Willie observes the busy-period duration
$Y$ as when Willie observes the number of
jobs $X$ served during each busy period.

The comparison is illustrated numerically in
\cref{fig:kx_ky_comparison} for $\mu=5$.
The figure confirms that $K_X\leq K_Y$
throughout the plotted range of $\lambda_w$.
Furthermore, both constants appear to diverge
as $\lambda_w\to0$ and as
$\lambda_w\to\mu$. An analytical characterization of the limiting
behavior of $K_X$ and $K_Y$ requires a
detailed investigation of the functions
$J(\lambda_w,\mu)$ and $C(\lambda_w,\mu)$
introduced in \cref{lem:Tayor_D_KL_Stat2}.
We leave this investigation for future work.

\section{Conclusion and future work}
\label{sec:Conclusion}

In this work, we investigated covert queuing in which Bob inserts an independent Poisson stream of jobs into a queue serving Willie's legitimate traffic. We considered two statistics available to Willie for detecting the presence of covert traffic: (i) the number of jobs served during a busy period $X$ and (ii) the busy-period length $Y$. For both statistics, we established the square-root law, showing that the maximum expected number of covert jobs satisfies $E_{b,U}^{*}(n)=\Theta(\sqrt{n})$, for $U\in\{X,Y\}$. Moreover, inserting $\omega(\sqrt{n})$ jobs on average enables Willie to reliably detect the covert traffic. We further characterized the corresponding covert queuing scaling constants, $K_X$ and $K_Y$, and argued that $K_X\leq K_Y$, indicating that the number of jobs served is at least as informative as the busy-period length for detecting covert traffic.

Several directions remain open for future investigation. One direction is to analytically characterize the asymptotic behavior of $K_X$ and $K_Y$ as $\lambda_w\to 0$ and $\lambda_w\to\mu^{-}$, along with the corresponding asymptotics of $J(\lambda_w,\mu)$ and $C(\lambda_w,\mu)$. Other directions include extending the analysis to more general job-insertion mechanisms beyond Poisson arrivals and to multi-server queueing systems.
\vspace{2in}

\appendices

\section{Proof for \cref{prop:ebn_lambda_b}}
\label{app:D}

Under $H_1$, the total arrival rate is
$\lambda_w+\lambda_b(n)$. The expected number of jobs served during a
session is
\begin{align}
    \mathbb{E}[X]
    =
    \frac{\mu}
    {\mu-\bigl(\lambda_w+\lambda_b(n)\bigr)}.
\end{align}
Since each arrival is a Bob job with probability
$\lambda_b(n)/(\lambda_w+\lambda_b(n))$, the expected number of Bob's
jobs served during one session is
\begin{align}
    \frac{\lambda_b(n)}
    {\lambda_w+\lambda_b(n)}
    \mathbb{E}[X].
\end{align}
Therefore, over $n$ sessions,
\begin{align}
    E_{b,U}(n)
    &=
    n
    \frac{\lambda_b(n)}
    {\lambda_w+\lambda_b(n)}
    \mathbb{E}[X]
    \notag\\
    &=
    n\lambda_b(n)
    \frac{\mu}
    {\bigl(\lambda_w+\lambda_b(n)\bigr)
    \bigl(\mu-\lambda_w-\lambda_b(n)\bigr)}.
    \label{eq:expected_jobs_stat1}
\end{align}

By the uniform stability assumption,
\begin{align}
    \eta
    \leq
    \mu-\lambda_w-\lambda_b(n)
    <
    \mu-\lambda_w,
    \label{Eq:t1}
\end{align}
whereas $\lambda_b(n)\geq0$ gives
\begin{align}
    \lambda_w
    \leq
    \lambda_w+\lambda_b(n)
    \leq
    \mu-\eta.
    \label{Eq:t2}
\end{align}
Since $\lambda_w>0$, \cref{Eq:t1,Eq:t2} imply that
\begin{align}
    \frac{\mu}
    {\bigl(\lambda_w+\lambda_b(n)\bigr)
    \bigl(\mu-\lambda_w-\lambda_b(n)\bigr)}
    =
    \Theta(1).
    \label{eq:lambda_b_rate}
\end{align}
Substituting \eqref{eq:lambda_b_rate} into
\eqref{eq:expected_jobs_stat1} yields
\begin{align}
    E_{b,U}(n)
    =
    n\Theta\bigl(\lambda_b(n)\bigr),
\end{align}
which proves \eqref{eq:Eb_equiv}. Rearranging gives
\begin{align}
    \lambda_b(n)
    =
    \frac{1}{n}\Theta\bigl(E_{b,U}(n)\bigr),
\end{align}
and completes the proof of Proposition~\ref{prop:ebn_lambda_b}.
\hfill $\blacksquare$

\section{Proof of \cref{Proposition:stat1_achievability}}
\label{app:A}

From Proposition~\ref{prop:ebn_lambda_b}, the assumption $E_{b,X}(n)=\omega(\sqrt{n})$
implies that
\begin{align}
    \lambda_b(n)=\omega\left(\frac{1}{\sqrt{n}}\right),
    \label{eq:lambda_b_omega}
\end{align}
or, equivalently,
\begin{align}
    \sqrt{n}\lambda_b(n)\to\infty.
    \label{eq:sqrt_n_lambda}
\end{align}

Willie computes the test statistic
\begin{align}
    \overline{S}
    \triangleq
    \frac{1}{n}\sum_{i=1}^n X_i.
\end{align}
and compares against a sequence $t_n$ (defined later).
Under $H_0$, the mean and variance of $\overline{S}$ are given by
\begin{align}
\mathbb{E}_0[\overline{S}] & = \frac{\mu}{\mu - \lambda_w} \triangleq  m_0, \\
\mathrm{Var}_0[\overline{S}] & = \frac{1}{n}  \frac{\mu \lambda_w (\mu + \lambda_w)}{(\mu - \lambda_w)^3} \triangleq \frac{v_0}{n}
\end{align}
which shows that $v_0>0$ is a constant independent of $n$. Similarly, under
$H_1$,
\begin{align}
\mathbb{E}_1[\overline{S}] & = \frac{\mu}{\mu - \lambda_w - \lambda_b} \triangleq m_1, \\
\mathrm{Var}_1[\overline{S}] &  = \frac{1}{n}  \frac{\mu (\lambda_w + \lambda_b) (\mu + \lambda_w + \lambda_b)}{(\mu - \lambda_w - \lambda_b)^3} \triangleq \frac{v_1(n)}{n}.
\end{align}
We assume that the queue is uniformly stable, i.e., there exists an
$\eta>0$ and $n_0$ such that
\begin{align}
    \mu-\lambda_w-\lambda_b(n)\geq\eta, \quad n \geq n_0.
\end{align}
 Consequently,
$v_1(n)=O(1)$.

The separation between the means under the two hypotheses is
\begin{align}
    m_1-m_0
    &=
    \frac{\mu}
    {\mu-\lambda_w-\lambda_b(n)}
    -
    \frac{\mu}{\mu-\lambda_w}
    \notag\\
    &=
    \frac{\mu\lambda_b(n)}
    {(\mu-\lambda_w-\lambda_b(n))(\mu-\lambda_w)}
    \notag\\
    &=
    \Theta\bigl(\lambda_b(n)\bigr),
    \label{eq:mean_separation}
\end{align}
where the last equality follows from uniform stability of the queue.

Now choose a sequence $t_n>0$ such that
\begin{align}
    t_n = \omega\left(\frac{1}{\sqrt{n}}\right)
    \qquad\text{and}\qquad
    t_n
    =
    o\bigl(\lambda_b(n)\bigr).
    \label{eq:threshold_conditions}
\end{align}
Such a sequence exists because of
\eqref{eq:lambda_b_omega}. In other words, there exists a 
\begin{align}
   \delta_n = o(1) 
\end{align}
 such that 
\begin{align}
    t_n
    = \delta_n \lambda_b(n)
    \label{eq:threshold_choice}
\end{align}
Willie decides $H_1$ if
\begin{align}
    |\overline{S}-m_0|\geq t_n
\end{align}
and decides $H_0$ otherwise.

We first consider the probability of false alarm. By Chebyshev's
inequality~\cite[Sec 3.7]{Ibe2014},
\begin{align}
    \alpha_n
    &=
    \mathbb{P}_0
    \left(
        |\overline{S}-m_0|\geq t_n
    \right)
    \notag\\
    &\leq
    \frac{\operatorname{Var}_0[\overline{S}]}{t_n^2}
    \notag\\
    &=
    \frac{v_0}{nt_n^2}.
\end{align}
since $    t_n = \omega\left(\frac{1}{\sqrt{n}}\right)$, we have
\begin{align}
    nt_n^2\to\infty,
\end{align}
and hence
\begin{align}
    \alpha_n\to0.
    \label{eq:alpha_to_zero}
\end{align}

We next consider the probability of missed detection, which is given by
\begin{align}
    \beta_n
    &=
    \mathbb{P}_1
    \left(
        |\overline{S}-m_0|<t_n
    \right).
    \label{beta_bound1}
\end{align}
To bound this probability under $\mathbb{P}_1$, we express the
missed-detection event in terms of the deviation of $\overline{S}$
from its mean $m_1$.
By the triangle inequality~\cite[Theorem~4.3]{RudinRCA}
\[
m_1-m_0
=
\left|(\overline{S}-m_0)-(\overline{S}-m_1)\right|
\leq
|\overline{S}-m_0|+|\overline{S}-m_1|,
\]
which gives
\[
|\overline{S}-m_1|
\geq
m_1-m_0-|\overline{S}-m_0|.
\]
Thus,
\[
|\overline{S}-m_0|<t_n
\quad\Longrightarrow\quad
|\overline{S}-m_1|>m_1-m_0-t_n,
\]
and hence
\[
\left\{
|\overline{S}-m_0|<t_n
\right\}
\subseteq
\left\{
|\overline{S}-m_1|\geq m_1-m_0-t_n
\right\}.
\]
\color{black}
Therefore,
\begin{align}
    \beta_n
    &\leq
    \mathbb{P}_1
    \left(
        |\overline{S}-m_1|
        \geq m_1-m_0-t_n
    \right)
    \notag\\
    &\leq
    \frac{\operatorname{Var}_1[\overline{S}]}
    {(m_1-m_0-t_n)^2}
    \notag\\
    &=
    \frac{v_1(n)}
    {n(m_1-m_0-t_n)^2},
    \label{eq:beta_bound}
\end{align}
where the second inequality follows from Chebyshev's inequality under
$\mathbb{P}_1$.

Eq. \eqref{eq:mean_separation} and the fact that
$t_n=o(\lambda_b(n))$ imply that
\begin{align}
    m_1-m_0-t_n
    =
    \Theta\bigl(\lambda_b(n)\bigr).
\end{align}
Moreover, $v_1(n)=O(1)$ by uniform stability. Hence,
\begin{align}
    \beta_n
    &=
    O\left(
        \frac{1}{n\lambda_b^2(n)}
    \right).
\end{align}
From \eqref{eq:sqrt_n_lambda}, we have
\begin{align}
    n\lambda_b^2(n)
    =    \left(\sqrt{n}\lambda_b(n)\right)^2
    \to\infty,
\end{align}
and therefore
\begin{align}
    \beta_n\to 0
    \label{eq:beta_to_zero}
\end{align}
as $n \to \infty$.

Combining \eqref{eq:alpha_to_zero} and
\eqref{eq:beta_to_zero}, we obtain
\begin{align}
    \alpha_n+\beta_n\to0,
\end{align}
which completes the proof of Proposition~\ref{Proposition:stat1_achievability}.
\hfill $\blacksquare$

\section{Taylor series approximation for KL-divergence of Statistic-1}
\label{app:E}

We find the second-order Taylor series approximation for $D_{\mathrm{KL}}(f_0\|f_1)$, where the density $f_j(\cdot), j=0,1$ is given by

\begin{equation}
\label{eq:number_stat1_forkl}
f_j(x)
=
\frac{1}{x}
\binom{2x-2}{x-1}
\frac{\rho_j^{x-1}}{(1+\rho_j)^{2x-1}},
\qquad x=1,2,\ldots,
\end{equation}
where
\[
\rho_0=\frac{\lambda_w}{\mu},
\qquad
\rho_1=\frac{\lambda_w+\lambda_b}{\mu},
\]
under $H_0$ and $H_1$, respectively. We obtain
\begin{align}
\frac{f_0(x)}{f_1(x)}
&=
\left(\frac{\rho_0}{\rho_1}\right)^{x-1}
\left(\frac{1+\rho_1}{1+\rho_0}\right)^{2x-1}
\nonumber\\
&=c_1^{x-1}c_2^{2x-1},
\label{eq:number_stat1_likelihood_ratio}
\end{align}
where
\[
c_1\triangleq \frac{\lambda_w}{\lambda_w+\lambda_b},
\qquad
c_2\triangleq\frac{\mu+\lambda_w+\lambda_b}{\mu+\lambda_w}.
\]
Therefore, the KL divergence is
\begin{align}
D_{\mathrm{KL}}(f_0\|f_1)
&=
\sum_{x=1}^{\infty}
f_0(x)
\log\frac{f_0(x)}{f_1(x)}
\nonumber\\
&=
\log(c_1)\sum_{x=1}^{\infty}f_0(x)(x-1)
\nonumber\\
&+
\log(c_2)\sum_{x=1}^{\infty}f_0(x)(2x-1).
\label{eq:number_stat1_kl}
\end{align}
From the distribution in \eqref{eq:expectation_stat1},
\[
\mathbb{E}_0[X]
=
\frac{1}{1-\rho_0}.
\]
Thus,
\[
\sum_{x=1}^{\infty}f_0(x)(x-1)
=
\frac{\rho_0}{1-\rho_0}
=
\frac{\lambda_w}{\mu-\lambda_w},
\]
and
\[
\sum_{x=1}^{\infty}f_0(x)(2x-1)
=
\frac{1+\rho_0}{1-\rho_0}
=
\frac{\mu+\lambda_w}{\mu-\lambda_w}.
\]
Consequently,
\begin{equation}
\label{eq:number_stat1_kl_exact}
D_{\mathrm{KL}}(f_0\|f_1)
=
\log(c_1)\frac{\lambda_w}{\mu-\lambda_w}
+
\log(c_2)\frac{\mu+\lambda_w}{\mu-\lambda_w}.
\end{equation}
To obtain the second-order Taylor approximation, we expand the
logarithmic terms around $\lambda_b=0$. Since
\[
\log(1+x)=x-\frac{x^2}{2}+O(x^3),
\]
we have
\begin{equation}
\log(c_1)
=
-\log\left(1+\frac{\lambda_b}{\lambda_w}\right)
=
-\frac{\lambda_b}{\lambda_w}
+\frac{\lambda_b^2}{2\lambda_w^2}
+O(\lambda_b^3),
\label{eq:number_stat1_logc1}
\end{equation}
\begin{align}
\log(c_2)
&=
\log\left(1+\frac{\lambda_b}{\mu+\lambda_w}\right)
\nonumber\\
&=
\frac{\lambda_b}{\mu+\lambda_w}
-\frac{\lambda_b^2}{2(\mu+\lambda_w)^2}
+O(\lambda_b^3).
\label{eq:number_stat1_logc2}
\end{align}

Substituting \eqref{eq:number_stat1_logc1} and
\eqref{eq:number_stat1_logc2} into \eqref{eq:number_stat1_kl_exact},
we obtain
\begin{align}
D_{\mathrm{KL}}(f_0\|f_1)
&=
\frac{\lambda_b^2}{2(\mu-\lambda_w)}
\left(
\frac{1}{\lambda_w}
-\frac{1}{\mu+\lambda_w}
\right)
+O(\lambda_b^3)
\nonumber\\
&=
\frac{\mu}
{2\lambda_w(\mu^2-\lambda_w^2)}
\lambda_b^2
+O(\lambda_b^3).
\label{eq:number_stat1_kl_taylor}
\end{align}
This completes the proof of Lemma~\ref{lem:stat1_kl}.
\hfill $\blacksquare$

\section{Proof of \cref{Proposition:stat2}}
\label{app:B}

\subsection{Proof of Achievability for statistic $U=Y$}

Recall that $Y_1,Y_2,\ldots,Y_n$ is the sequence of random
variables corresponding to the lengths of $n$ successive busy
periods and $g_i(y)$ denotes the density of $Y_j$ under
$H_i$ for $i=0,1$ and $j=1,2,\ldots,n$.

From \cref{lem:Tayor_D_KL_Stat2}, we
have
\begin{equation}
D_{\mathrm{KL}}(g_0\|g_1)
=
C(\lambda_w, \mu)\lambda_b^2+o(\lambda_b^2),
\label{eq:kl_stat2_ach}
\end{equation}
For $n$ independent busy periods, the KL divergence is 
\begin{equation}
D_{\mathrm{KL}}(g_0^n\|g_1^n)
=
nD_{\mathrm{KL}}(g_0\|g_1).
\end{equation}
For covertness, from \cref{covert_defibition}
\begin{equation}
nD_{\mathrm{KL}}(g_0,g_1)\leq\delta.
\end{equation}
From \eqref{eq:kl_stat2_ach},
\begin{equation}
n\left(C(\lambda_w, \mu)\lambda_b^2+o(\lambda_b^2)\right)
\leq\delta.
\label{eq:intermediate_step}
\end{equation}
Simplifying \eqref{eq:intermediate_step}, asymptotically, for covertness $\lambda_b$ should satisfy
\begin{equation}
\lambda_b
\leq
\sqrt{
\frac{\delta}
{nC(\lambda_w, \mu)}
}.
\end{equation}
Thus, an insertion rate satisfying
\begin{equation}
\lambda_b=O\left(\frac{1}{\sqrt{n}}\right)
\end{equation}
is achievable.
To establish the achievability result, choose the insertion rate as
\begin{equation}
\lambda_b
=
\frac{\eta\sqrt{\delta}}
{\sqrt{nC(\lambda_w,\mu)}},
\qquad 0<\eta<1.
\label{eq:lambda_b_choice}
\end{equation}
By \cref{prop:ebn_lambda_b},
\begin{equation}
    E_{b,Y}(n)=n\Theta(\lambda_b).
\end{equation}
Hence,
\begin{equation}
    E_{b,Y}(n)=\Theta(\sqrt{n}),
\end{equation}
which proves the achievability part of \cref{Proposition:stat2}
\hfill $\blacksquare$

\subsection{Proof of Converse}

From Proposition~\ref{prop:ebn_lambda_b}, the assumption $E_{b,Y}(n)=\omega(\sqrt{n})$
implies that 
\begin{align}
    \lambda_b(n)=\omega\left(\frac{1}{\sqrt{n}}\right),
    \label{eq:lambda_b_omega_stat2}
\end{align}
or, equivalently,
\begin{align}
    \sqrt{n}\lambda_b(n)\to\infty.
    \label{eq:sqrt_n_lambda_stat2}
\end{align}

Willie computes the test statistic
\begin{align}
    \overline{S}
    \triangleq
    \frac{1}{n}\sum_{i=1}^n Y_i.
\end{align}
and compares against a sequence $t_n$ (defined later).
Under $H_0$, the mean and variance of $\overline{S}$ are given by
\begin{align}
\label{stat_2_mean}
\mathbb{E}_0[\overline{S}] & = \frac{1}{\mu-\lambda_w} \triangleq  m_0, \\
\label{stat_2_variance}
\mathrm{Var}_0[\overline{S}] & = \frac{1}{n}
\frac{\mu+\lambda_w}
{(\mu-\lambda_w)^3} \triangleq \frac{v_0}{n}
\end{align}
which shows that $v_0>0$ is a constant independent of $n$. Similarly, under
$H_1$,
\begin{align}
\mathbb{E}_1[\overline{S}] & = \frac{1}
{\mu-\lambda_w-\lambda_b} \triangleq m_1, \\
\mathrm{Var}_1[\overline{S}] &  = \frac{1}{n}
\frac{\mu+\lambda_w+\lambda_b}
{(\mu-\lambda_w-\lambda_b)^3} \triangleq \frac{v_1(n)}{n}.
\end{align}
We assume that the queue is uniformly stable, i.e., there exists an
$\eta>0$ and $n_0$ such that
\begin{align}
    \mu-\lambda_w-\lambda_b(n)\geq\eta, \quad n \geq n_0.
\end{align}
 Consequently,
$v_1(n)=O(1)$.

The separation between the means under the two hypotheses is
\begin{align}
    m_1-m_0
    &=
    \frac{1}
    {\mu-\lambda_w-\lambda_b(n)}
    -
    \frac{1}{\mu-\lambda_w}
    \notag\\
    &=
    \frac{\lambda_b(n)}
    {(\mu-\lambda_w-\lambda_b(n))(\mu-\lambda_w)}
    \notag\\
    &=
    \Theta\bigl(\lambda_b(n)\bigr),
    \label{eq:mean_separation_stat2}
\end{align}
where the last equality follows from uniform stability of the queue.

Now choose a sequence $t_n>0$ such that
\begin{align}
    t_n = \omega\left(\frac{1}{\sqrt{n}}\right)
    \qquad\text{and}\qquad
    t_n
    =
    o\bigl(\lambda_b(n)\bigr).
    \label{eq:threshold_conditions_stat2}
\end{align}
Such a sequence exists because of
\eqref{eq:lambda_b_omega_stat2}. In other words, there exists a 
\begin{align}
   \delta_n = o(1) 
\end{align}
 such that 
\begin{align}
    t_n
    = \delta_n \lambda_b(n)
    \label{eq:threshold_choice_stat2}
\end{align}
Willie decides $H_1$ if
\begin{align}
    |\overline{S}-m_0|\geq t_n
\end{align}
and decides $H_0$ otherwise.

We first consider the probability of false alarm. By Chebyshev's
inequality~\cite[Sec 3.7]{Ibe2014},
\begin{align}
    \alpha_n
    &=
    \mathbb{P}_0
    \left(
        |\overline{S}-m_0|\geq t_n
    \right)
    \notag\\
    &\leq
    \frac{\operatorname{Var}_0[\overline{S}]}{t_n^2}
    \notag\\
    &=
    \frac{v_0}{nt_n^2}.
\end{align}
Since $    t_n = \omega\left(\frac{1}{\sqrt{n}}\right)$, we have
\begin{align}
    nt_n^2\to\infty,
\end{align}
and hence
\begin{align}
    \alpha_n\to0.
    \label{eq:alpha_to_zero_stat2}
\end{align}

We next consider the probability of missed detection, which is given by
\begin{align}
    \beta_n
    &=
    \mathbb{P}_1
    \left(
        |\overline{S}-m_0|<t_n
    \right).
    \label{beta_bound1_stat2}
\end{align}
To bound this probability under $\mathbb{P}_1$, we express the
missed-detection event in terms of the deviation of $\overline{S}$
from its mean $m_1$.
By the triangle inequality~\cite[Theorem~4.3]{RudinRCA},
\[
m_1-m_0
=
\left|(\overline{S}-m_0)-(\overline{S}-m_1)\right|
\leq
|\overline{S}-m_0|+|\overline{S}-m_1|,
\]
which gives
\[
|\overline{S}-m_1|
\geq
m_1-m_0-|\overline{S}-m_0|.
\]
Thus,
\[
|\overline{S}-m_0|<t_n
\quad\Longrightarrow\quad
|\overline{S}-m_1|>m_1-m_0-t_n,
\]
and hence
\[
\left\{
|\overline{S}-m_0|<t_n
\right\}
\subseteq
\left\{
|\overline{S}-m_1|\geq m_1-m_0-t_n
\right\}.
\]
\color{black}
Therefore,
\begin{align}
    \beta_n
    &\leq
    \mathbb{P}_1
    \left(
        |\overline{S}-m_1|
        \geq m_1-m_0-t_n
    \right)
    \notag\\
    &\leq
    \frac{\operatorname{Var}_1[\overline{S}]}
    {(m_1-m_0-t_n)^2}
    \notag\\
    &=
    \frac{v_1(n)}
    {n(m_1-m_0-t_n)^2},
    \label{eq:beta_bound_stat2}
\end{align}
where the second inequality follows from Chebyshev's inequality under
$\mathbb{P}_1$.
Eq. \eqref{eq:mean_separation_stat2} and the fact that
$t_n=o(\lambda_b(n))$ imply that
\begin{align}
    m_1-m_0-t_n
    =
    \Theta\bigl(\lambda_b(n)\bigr).
\end{align}
Moreover, $v_1(n)=O(1)$ by uniform stability. Hence,
\begin{align}
    \beta_n
    &=
    O\left(
        \frac{1}{n\lambda_b^2(n)}
    \right).
\end{align}

From \eqref{eq:sqrt_n_lambda_stat2}, we have
\begin{align}
    n\lambda_b^2(n)
    =
    \left(\sqrt{n}\lambda_b(n)\right)^2
    \to\infty,
\end{align}
and therefore
\begin{align}
    \beta_n\to 0
    \label{eq:beta_to_zero_stat2}
\end{align}
as $n \to \infty$.

Combining \eqref{eq:alpha_to_zero_stat2} and
\eqref{eq:beta_to_zero_stat2}, we obtain
\begin{align}
    \alpha_n+\beta_n\to0,
\end{align}
which completes the proof of Proposition~\ref{Proposition:stat2}.
\hfill $\blacksquare$

\section{Proof oF Lemma~\ref{lem:Tayor_D_KL_Stat2}}
\label{app:C}

Let
\begin{equation}
g_i(y)
=
\frac{e^{-(\xi_i+\mu)y}}{y}
\sqrt{\frac{\mu}{\xi_i}}
I_1\!\left(2y\sqrt{\mu\xi_i}\right),
\qquad y>0,
\end{equation}
where
\[
\xi_0=\lambda_w,
\qquad
\xi_1=\lambda_w+\lambda_b.
\]
Throughout the proof, we assume $0<\lambda_w<\mu$, and fix $0 < \eta < \mu -\lambda_w$. We restrict $0 \leq \lambda_b \leq \eta$ which is sufficient for the expansion around $\lambda_b =0$. For convenience, we define
\begin{equation}
u_0=2y\sqrt{\mu\lambda_w},
\qquad
u_1=2y\sqrt{\mu(\lambda_w+\lambda_b)}.
\label{u0u1}
\end{equation}

We regard the KL divergence as a function of $\lambda_b$, and hence we define
\begin{align}
D(\lambda_b)
&\triangleq
D_{\mathrm{KL}}(g_0\|g_1) \notag\\
&=
\int_0^\infty
g_0(y)
\log
\frac{g_0(y)}{g(y,\lambda_b)}
\,dy,
\label{eq:f_lambda}
\end{align}
where $g(y,\lambda_b)=g_1(y)$. Since $g(y,0)=g_0(y)$, we have
\begin{equation}
D(0)=0.
\end{equation}
We next evaluate the first two derivatives of
$D(\lambda_b)$ at $\lambda_b=0$. To this end, we define
\begin{align}
h(y,\lambda_b)
\triangleq
g_0(y)
\log\frac{g_0(y)}{g(y,\lambda_b)}.
\end{align}
Then, we have
\begin{align}
\frac{\partial h(y,\lambda_b)}{\partial\lambda_b}
=
-g_0(y)
\frac{\partial}{\partial\lambda_b}
\log g(y,\lambda_b).
\end{align}
Using
\begin{align}
\log g(y,\lambda_b)
={}&
-(\mu+\lambda_w+\lambda_b)y
-\log y
\nonumber\\
&+\frac12\log\mu -\frac12\log(\lambda_w+\lambda_b)
+\log I_1(u_1),
\end{align}
we obtain
\begin{align}
\frac{\partial h(y,\lambda_b)}{\partial\lambda_b}
=
g_0(y)
\left[
y+\frac{1}{2(\lambda_w+\lambda_b)}
-
y\sqrt{\frac{\mu}{\lambda_w+\lambda_b}}
\frac{I_1'(u_1)}{I_1(u_1)}
\right].
\label{diffh}
\end{align}

We have the following Bessel identity from~\cite[Eq.~(9.6.26)]{AbramowitzStegun}
\begin{equation}
\label{eq:Bessel_identity}
\frac{I_1'(x)}{I_1(x)}
=
\frac{I_0(x)}{I_1(x)}-\frac1x
\end{equation}
and the bound~\cite[Lemma~2]{Henderson2000}
\begin{equation}
\label{eq:Bessel_bound}
\frac{I_0(x)}{I_1(x)}
<
1+\frac{2}{x},
\qquad x>0.
\end{equation}
From \eqref{eq:Bessel_identity} and \eqref{eq:Bessel_bound},
we obtain, for $0\leq\lambda_b\leq\eta$,
\begin{align}
\left|
\frac{\partial h(y,\lambda_b)}
{\partial\lambda_b}
\right|
\leq
g_0(y)
\left[
\left(1+\sqrt{\frac{\mu}{\lambda_w}}\right)y
+\frac{3}{2\lambda_w}
\right].
\label{eq:dominating_h}
\end{align}
Since
\begin{equation}
\int_0^\infty g_0(y)\,dy=1,
\qquad
\int_0^\infty yg_0(y)\,dy
=
\frac{1}{\mu-\lambda_w}.
\end{equation}
the right-hand side (RHS) of 
\eqref{eq:dominating_h} is integrable. Hence, by the dominated convergence theorem~\cite[Theorem~1.34]{RudinRCA}, differentiation under the integral sign is justified, and we obtain
\begin{align}
D'(\lambda_b)
={}&
\frac{1}{\mu-\lambda_w}
+
\frac{1}{\lambda_w+\lambda_b} 
\nonumber\\
&-
\frac{\sqrt{\mu}}
{\sqrt{\lambda_w+\lambda_b}}
\int_0^\infty
yg_0(y)
\frac{I_0(u_1)}{I_1(u_1)}
\,dy.
\label{eq:fprime_final}
\end{align}

Substituting the expression for $g_0(y)$, define
\begin{equation}
F(y,\lambda_b)
\triangleq
e^{-(\mu+\lambda_w)y}
I_1(u_0)
\frac{I_0(u_1)}{I_1(u_1)}.
\label{eq:F_definition}
\end{equation}
The dominated convergence theorem~\cite[Theorem~1.34]{RudinRCA}
allows us to interchange the limit $\lambda_b\to0$ and the integral.
Since $u_1\to u_0$ as $\lambda_b\to0$, we have
\begin{equation}
\lim_{\lambda_b\to0}F(y,\lambda_b)
=
e^{-(\mu+\lambda_w)y}I_0(u_0).
\label{eq:F_limit}
\end{equation}

Therefore, evaluating \eqref{eq:fprime_final} at $\lambda_b=0$ gives
\begin{align}
D'(0)
={}&
\frac{1}{\mu-\lambda_w}
+
\frac{1}{\lambda_w}
\nonumber\\
&-
\frac{\mu}{\lambda_w}
\int_0^\infty
\,
e^{-(\mu+\lambda_w)y}
I_0(u_0)\,dy.
\label{eq:Dprime_zero}
\end{align}
Using the Bessel integral identity in~\cite[Eq.~(6.611.1)]{GradshteynRyzhik}
\begin{equation}
\int_0^\infty e^{-\alpha y} I_0(\beta y)\,dy
=
\frac{1}{\sqrt{\alpha^2-\beta^2}},
\qquad \alpha>|\beta|.
\label{eq:bessel_integral}
\end{equation} for 
\[
\alpha=\mu+\lambda_w,
\qquad
\beta=2\sqrt{\mu\lambda_w},
\]
we obtain
\begin{align}
\int_0^\infty
e^{-(\mu+\lambda_w)y}
I_0\left(2y\sqrt{\mu\lambda_w}\right)
\,dy
&=
\frac{1}
{\sqrt{(\mu+\lambda_w)^2-4\mu\lambda_w}}
\nonumber\\
&=
\frac{1}{\mu-\lambda_w},
\end{align}
where we used $0<\lambda_w<\mu$. Consequently,
\begin{align}
D'(0)
&=
\frac{1}{\mu-\lambda_w}
+
\frac{1}{\lambda_w}
-
\frac{\mu}
{\lambda_w(\mu-\lambda_w)}
\nonumber\\
&=0.
\label{eq:Dprime_zero_final}
\end{align}
\color{black}
For
\begin{align}
D'(\lambda_b)
=
\frac{1}{\mu-\lambda_w}
+
\frac{1}{\lambda_w+\lambda_b}
-
A(\lambda_b)B(\lambda_b),
\label{eq:Dprime_AB}
\end{align}
where
\begin{equation}
A(\lambda_b)
\triangleq
\frac{\mu}
{\sqrt{\lambda_w(\lambda_w+\lambda_b)}}
\end{equation}
and
\begin{equation}
B(\lambda_b)
\triangleq
\int_0^\infty F(y,\lambda_b)\,dy.
\end{equation}

We next evaluate the second derivative. Differentiating
\eqref{eq:Dprime_AB} gives
\begin{align}
D''(\lambda_b)
=&
-\frac{1}{(\lambda_w+\lambda_b)^2}
+
\frac{\mu}
{2\sqrt{\lambda_w}(\lambda_w+\lambda_b)^{3/2}}
B(\lambda_b) 
\nonumber\\
&-
\frac{\mu}
{\sqrt{\lambda_w(\lambda_w+\lambda_b)}}
B'(\lambda_b).
\label{eq:Dsecond_AB}
\end{align}

To evaluate \(B'(\lambda_b)\), observe that
\begin{align}
\frac{\partial}{\partial\lambda_b}
\left(
\frac{I_0(u_1)}{I_1(u_1)}
\right)
=
\frac{\partial u_1}{\partial\lambda_b}
\left[
1-
\left(\frac{I_0(u_1)}{I_1(u_1)}\right)^2
+
\frac{I_0(u_1)}{u_1I_1(u_1)}
\right],
\end{align}
where
\begin{equation}
\frac{\partial u_1}{\partial\lambda_b}
=
y\sqrt{\frac{\mu}{\lambda_w+\lambda_b}}.
\end{equation}
Using the Bessel-function bounds in \eqref{eq:Bessel_identity} and \eqref{eq:Bessel_bound}, we obtain
\begin{align}
\left|
\frac{\partial F(y,\lambda_b)}
{\partial\lambda_b}
\right|
\leq
\mu
e^{-q_1y}
\left[
2y^2+
\frac{5y}{2\sqrt{\mu\lambda_w}}
+
\frac{3}{2\mu\lambda_w}
\right],
\label{eq:F_derivative_bound}
\end{align}
where
\begin{equation}
q_1
=
\mu+\lambda_w-2\sqrt{\mu\lambda_w}
=
(\sqrt{\mu}-\sqrt{\lambda_w})^2>0.
\end{equation}
The RHS of \eqref{eq:F_derivative_bound} is integrable and
independent of $\lambda_b$. Therefore, differentiation under the
integral sign is justified. After expanding $B'(\lambda_b)$, we obtain
\begin{align}
B'(\lambda_b)
=&
\sqrt{\frac{\lambda_w}{\lambda_w+\lambda_b}}
\frac{2\mu}{(\mu-\lambda_w)^3}
-
\sqrt{\frac{\mu}{\lambda_w+\lambda_b}}
Q(\lambda_b)
\nonumber\\
&+
\frac{1}{2(\lambda_w+\lambda_b)}
B(\lambda_b),
\label{eq:Bprime_terms}
\end{align}
where
\begin{equation}
Q(\lambda_b)
\triangleq
\int_0^\infty
y e^{-(\mu+\lambda_w)y}
I_1(u_0)
\left(
\frac{I_0(u_1)}{I_1(u_1)}
\right)^2dy.
\label{eq:Q_definition}
\end{equation}
Here we used
\begin{equation}
\mathbb{E}_{g_0}[Y^2]
=
\frac{2\mu}{(\mu-\lambda_w)^3}.
\end{equation}

Substituting \eqref{eq:Bprime_terms} into
\eqref{eq:Dsecond_AB}, the terms containing $B(\lambda_b)$ cancel,
yielding
\begin{align}
D''(\lambda_b)
=&
-\frac{1}{(\lambda_w+\lambda_b)^2}
-
\frac{2\mu^2}
{(\lambda_w+\lambda_b)(\mu-\lambda_w)^3} 
\nonumber\\
&+
\frac{\mu^{3/2}}
{\sqrt{\lambda_w}(\lambda_w+\lambda_b)}
Q(\lambda_b).
\label{eq:Dsecond_simplified}
\end{align}

It remains to evaluate \(Q(\lambda_b)\) as
\(\lambda_b\to0\). Since \(u_1\geq u_0\), the Bessel function bounds give
\begin{equation}
\begin{aligned}
y e^{-(\mu+\lambda_w)y} I_1(u_0)
\left(\frac{I_0(u_1)}{I_1(u_1)}\right)^2
&\leq
\sqrt{\mu\lambda_w}\,e^{-q_1y}
\left(
y^2+\frac{2y}{\sqrt{\mu\lambda_w}}
\right.
\\[-1mm]
&\qquad\left.
+\frac{1}{\mu\lambda_w}
\right).
\end{aligned}
\label{eq:Bessel_upper}
\end{equation}
The RHS of \eqref{eq:Bessel_upper} is integrable because $q_1>0$.
Hence, by the dominated convergence theorem, it follows that
\begin{align}
\lim_{\lambda_b\to0}Q(\lambda_b)
&=
\int_0^\infty
y e^{-(\mu+\lambda_w)y}
\frac{I_0^2(u_0)}{I_1(u_0)}
\,dy \notag\\
&\triangleq
J(\lambda_w,\mu).
\end{align}
Therefore,
\begin{align}
D''(0)
=
-\frac{1}{\lambda_w^2}
-\frac{2\mu^2}
{\lambda_w(\mu-\lambda_w)^3}
+
\frac{\mu^{3/2}}{\lambda_w^{3/2}}
J(\lambda_w,\mu).
\end{align}

By defining
\begin{align}
C(\lambda_w,\mu)
\triangleq
\frac12
\left[
-\frac{1}{\lambda_w^2}
-\frac{2\mu^2}
{\lambda_w(\mu-\lambda_w)^3}
+
\frac{\mu^{3/2}}{\lambda_w^{3/2}}
J(\lambda_w,\mu)
\right],
\end{align}
Since $D(\lambda_b)=D_{\mathrm{KL}}(g_0\|g_1)\geq 0$ for all
$\lambda_b\geq 0$, and $D(0)=D'(0)=0$, the quadratic coefficient
satisfies $C(\lambda_w,\mu)\geq 0$.
Thus, the second-order Taylor expansion around $\lambda_b=0$ yields
\begin{equation}
D_{\mathrm{KL}}(g_0\|g_1)
=C(\lambda_w,\mu)\lambda_b^2+o(\lambda_b^2).
\end{equation}
which completes the proof of Lemma~\ref{lem:Tayor_D_KL_Stat2}.

To show that $C(\lambda_w,\mu)>0$, define the score function
\begin{align}
s(y)
&\triangleq
\left.
\frac{\partial}{\partial \xi}
\log g_\xi(y)
\right|_{\xi=\lambda_w}.
\label{eq:score_function}
\end{align}
By the differentiability established above,
\begin{align}
2C(\lambda_w,\mu)
&=
\left.
\frac{\partial^2}{\partial\lambda_b^2}
D_{\mathrm{KL}}(g_0\|g_1)
\right|_{\lambda_b=0}
=
\mathbb{E}_0[s(Y)^2].
\label{eq:C_fisher_information}
\end{align}

From \eqref{stat_2_mean} and \eqref{stat_2_variance}
\begin{align}
\mathbb{E}_\xi[Y]
&=
\frac{1}{\mu-\xi},
&
\operatorname{Var}_\xi(Y)
&=
\frac{\mu+\xi}{(\mu-\xi)^3},
\end{align}
differentiating $\mathbb{E}_\xi[Y]$ with respect to $\xi$ gives
\begin{align}
\mathbb{E}_0[Ys(Y)]
&=
\left.
\frac{d}{d\xi}\mathbb{E}_\xi[Y]
\right|_{\xi=\lambda_w}
=
\frac{1}{(\mu-\lambda_w)^2}.
\label{eq:score_moment}
\end{align}
Moreover, $\mathbb{E}_0[s(Y)]=0$, and hence
\begin{align}
\mathbb{E}_0
\left[
\bigl(Y-\mathbb{E}_0[Y]\bigr)s(Y)
\right]
&=
\frac{1}{(\mu-\lambda_w)^2}.
\label{eq:centered_score_moment}
\end{align}

By the Cauchy--Schwarz inequality~\cite[Theorem~4.2]{RudinRCA},
\begin{align}
\mathbb{E}_0[s(Y)^2]
&\geq
\frac{
\mathbb{E}_0
\left[
\bigl(Y-\mathbb{E}_0[Y]\bigr)s(Y)
\right]^2
}{
\operatorname{Var}_0(Y)
}
\nonumber\\
&=
\frac{1}{\mu^2-\lambda_w^2}.
\label{eq:fisher_lower_bound}
\end{align}
Combining \eqref{eq:C_fisher_information} and
\eqref{eq:fisher_lower_bound} gives
\begin{align}
C(\lambda_w,\mu)
&\geq
\frac{1}{2(\mu^2-\lambda_w^2)}
>0.
\label{eq:C_strictly_positive}
\end{align}

\hfill $\blacksquare$

\section{Proof of Theorem~\ref{Thm:KY}}
\label{app:proof_thm_KY}

First, define
\begin{align}
    c_Y \triangleq C(\lambda_w,\mu),
\end{align}
where $C(\lambda_w,\mu)>0$ is given in
\eqref{Eq:C_lambda_mu_remark}.
From Lemma~\ref{lem:Tayor_D_KL_Stat2}, as $\lambda_b\to 0$,
\begin{align}
D_{\mathrm{KL}}(g_0\|g_1)
=
c_Y\lambda_b^2+o(\lambda_b^2).
\label{eq:KY_KL_expansion}
\end{align}
Moreover, the expected number of additional jobs
inserted by Bob over $n$ sessions is
\begin{align}
E_{b,Y}(n)
&=
n\frac{\lambda_b(n)}
{\lambda_w+\lambda_b(n)}\mathbb{E}[X]
\nonumber\\
&=
\frac{n\mu\lambda_b(n)}
{(\lambda_w+\lambda_b(n))
(\mu-\lambda_w-\lambda_b(n))}.
\label{eq:KY_expected_jobs}
\end{align}

\noindent\textit{Achievability:}
Choose the insertion rate
\begin{align}
\lambda_b(n)
=
\sqrt{\frac{\delta}{nc_Y}}.
\label{eq:KY_ach_rate}
\end{align}
Since the session statistics are i.i.d.,
\eqref{eq:KY_KL_expansion} yields
\begin{align}
D_{\mathrm{KL}}(g_0^{n}\|g_1^{n})
&=
nD_{\mathrm{KL}}(g_0\|g_1)
\nonumber\\
&=
nc_Y\lambda_b(n)^2+o(1)
\nonumber\\
&=
\delta+o(1).
\label{eq:KY_ach_KL}
\end{align}
Thus, the asymptotic KL-divergence covertness
constraint is satisfied.

Furthermore, substituting
\eqref{eq:KY_ach_rate} into
\eqref{eq:KY_expected_jobs}, we obtain
\begin{align}
\lim_{n\to\infty}
\frac{E_{b,Y}(n)}{\sqrt{n\delta}}
&=
\frac{\mu}
{\lambda_w(\mu-\lambda_w)\sqrt{c_Y}}
\nonumber\\
&=
\frac{\mu}
{\lambda_w(\mu-\lambda_w)
\sqrt{C(\lambda_w,\mu)}}.
\label{eq:KY_ach_bound}
\end{align}
This establishes the achievability bound.

\medskip
\noindent\textit{Converse:}
Consider any sequence of insertion rates
$\{\lambda_b(n)\}_{n\geq 1}$ satisfying
the asymptotic KL-divergence covertness
constraint
\begin{align}
\limsup_{n\to\infty}
D_{\mathrm{KL}}
(g_0^{ n}\|g_1^{n})
\leq \delta.
\label{eq:KY_converse_constraint}
\end{align}
Since the KL divergence vanishes only when
$\lambda_b=0$ and is bounded away from zero
for insertion rates bounded away from zero,
\eqref{eq:KY_converse_constraint} implies
\begin{align}
\lambda_b(n)\to 0.
\end{align}

Consequently, from
\eqref{eq:KY_KL_expansion}, we obtain
\begin{align}
\limsup_{n\to\infty}
n\lambda_b(n)^2
\leq
\frac{\delta}{c_Y}.
\label{eq:KY_converse_rate}
\end{align}

Using \eqref{eq:KY_expected_jobs} and
\eqref{eq:KY_converse_rate}, it follows that
\begin{align}
\limsup_{n\to\infty}
\frac{E_{b,Y}(n)}{\sqrt{n\delta}}
&\leq
\frac{\mu}
{\lambda_w(\mu-\lambda_w)\sqrt{c_Y}}
\nonumber\\
&=
\frac{\mu}
{\lambda_w(\mu-\lambda_w)
\sqrt{C(\lambda_w,\mu)}}.
\label{eq:KY_converse_bound}
\end{align}

Combining the achievability and converse
bounds completes the proof of Theorem~\ref{Thm:KY}.
\hfill $\blacksquare$

\bibliographystyle{IEEEtran}
\bibliography{References}

\end{document}